\newcites{app}{Additional references in the appendix}
\declaretheorem[style=plain,name=Theorem]{theorem}
\declaretheorem[style=plain,sibling=theorem,name=Lemma]{lemma}
\declaretheorem[style=plain,sibling=theorem,name=Proposition]{proposition}
\declaretheorem[style=plain,sibling=theorem,name=Corollary]{corollary}
\declaretheorem[style=plain,sibling=theorem,name=Conjecture]{conjecture}
\crefname{theorem}{Theorem}{Theorems}
\crefname{proposition}{Proposition}{Propositions}
\crefname{lemma}{Lemma}{Lemmas}
\crefname{exmp}{Example}{Examples}
\crefname{corollary}{Corollary}{Corollarys}
\crefname{claim}{Claim}{Claims}
\crefname{remark}{Remark}{Remarks}
\crefname{section}{Section}{Sections}
\newcommand{\prb}[1]{\textnormal{\scshape #1}}
\newcommand{\Rule}{\mathsf{R}}
\newcommand{\kTJ}{k\text{-}\mathsf{TJ}}
\newcommand{\numTJ}[1]{$#1$\text{-}\mathsf{TJ}}
\newcommand{\TJ}{\mathsf{TJ}}
\newcommand{\kTS}{k\text{-}\mathsf{TS}}
\newcommand{\numTS}[1]{$#1$\text{-}\mathsf{TS}}
\newcommand{\TS}{\mathsf{TS}}
\newcommand{\bicliquesize}{b}
\newcommand{\hISIS}{$H$-induced subgraph isomorphic set}
\newcommand{\gISIS}[1]{$#1$-induced subgraph isomorphic set}
\newcommand{\hgraph}{pattern graph}
\newcommand{\ggraph}{host graph}
\newcommand{\word}{W}
\newcommand{\rev}[1]{#1}
\newcommand{\revc}[1]{#1}
\newcommand{\revp}[1]{#1}
\newcommand{\ISIso}{\prb{ISIso}}
\newcommand{\ISIsoR}{\prb{ISIsoR}}
\begin{document}
\title{Changing Induced Subgraph Isomorphisms Under Extended Reconfiguration Rules}

\author[1]{Tatsuhiro Suga
\thanks{suga.tatsuhiro.p5@dc.tohoku.ac.jp}}
\author[1]{Akira Suzuki
\thanks{akira@tohoku.ac.jp}}
\author[1]{Yuma Tamura
\thanks{tamura@tohoku.ac.jp}}
\author[1]{Xiao Zhou
\thanks{zhou@tohoku.ac.jp}}

\affil[1]{Graduate School of Information Sciences, Tohoku University, Sendai, Japan}

\date{}
\maketitle          
\begin{abstract}
In a reconfiguration problem, we are given two feasible solutions of a combinatorial problem and our goal is to determine whether it is possible to reconfigure one into the other, with the steps dictated by specific reconfiguration rules. Traditionally, most studies on reconfiguration problems have focused on rules that allow changing a single element at a time. In contrast, this paper considers scenarios in which $k \ge 2$ elements can be changed simultaneously. We investigate the general reconfiguration \rev{problem} of isomorphisms.
For the \prb{Induced Subgraph Isomorphism Reconfiguration} problem, we show that the problem remains {\PSPACE}-complete even under stringent constraints on the pattern graph when $k$ is constant. We then give two meta-theorems applicable when $k$ is slightly less than the number of vertices in the pattern graph.
In addition, we investigate the complexity of the \prb{Independent Set Reconfiguration} problem, which is a special case of the \prb{Induced Subgraph Isomorphism Reconfiguration} problem.
\end{abstract}

\section{Introduction}
In the field of combinatorial reconfigurations, we focus on the relationships between feasible solutions of combinatorial search problems. This field often addresses reconfiguration problems, which involve determining whether a current feasible solution can be changed to a target feasible solution in a step-by-step manner under a prescribed rule, called a \emph{reconfiguration rule}. 
For example, \rev{in} the \prb{Independent Set Reconfiguration} problem (\prb{ISR} for short) under the token jumping rule, which is one of the most studied combinatorial problems~\cite{BKLMOS21,BB17,DBLP:conf/swat/BonsmaKW14,BFHM21,DDFHIOOUY15,HD05,HU16,IDHPSUU11,KMM12,LM19}, we are given a graph $G$ and its two independent sets $S_s$ and $S_t$.
The goal is to find a sequence $\sigma = \langle S_s = S_0, S_1, ..., S_\ell = S_t \rangle$ of independent sets of $G$ such that for each $i \in \{1, ..., \ell\}$, 
the size of the symmetric difference between $S_{i}$ and $S_{i-1}$ is 2, that is, $|S_{i} \setminus S_{i-1}| = |S_{i-1} \setminus S_{i}| = 1$.
Although the study of reconfiguration problems dates back to classical puzzles like the Tower of Hanoi and the 15-puzzle, research in this area has rapidly advanced since Ito et al.\ proposed a combinatorial reconfiguration framework in 2011~\cite{IDHPSUU11}. Various results on reconfiguration problems are summarized in several survey papers~\cite{DBLP:books/cu/p/Heuvel13,DBLP:journals/algorithms/Nishimura18}.

While reconfiguration problems have attracted attention in theoretical computer science, due to their fundamental properties, they are also well-studied for modeling continuously operating services that cannot afford downtime, such as infrastructure and monitoring systems. For instance, in an electrical power distribution network, it is not practical to halt the power supply to modify the switch configuration. Instead, it is essential to change the current configuration to the desired one step-by-step. To apply the theoretical insights developed in the field of combinatorial reconfigurations to real-world scenarios, various solvers for different reconfiguration problems have been developed. In particular, solvers for \rev{\prb{ISR}} have been created based on AI planning methods~\cite{DBLP:conf/ecai/ChristenE0MPPSS23}, answer set programming~\cite{DBLP:conf/walcom/YamadaBISU24}, zero-suppressed binary decision diagram (ZDD)~\cite{DBLP:conf/cpaior/ItoKNSSTT23}, and bounded model checking~\cite{DBLP:conf/ictai/TodaIKSST23}. Notably, all the above studies were published in 2023 and 2024. 
Thus, combinatorial reconfigurations are progressing from theoretical research to practical applications.

However, the existing reconfiguration rules face certain issues when reconfiguration problems are applied to real-world situations. Suppose that we are required to change the current configuration to perform maintenance on an electrical power distribution network, but we \rev{are} aware that there is no way to reach the desired configuration under the existing reconfiguration rule. It makes no sense to abandon maintenance for that reason. We must somehow find a way to change to the desired configuration.

To resolve this, this paper deals with extended reconfiguration rules. Most of the reconfiguration problems in previous work iteratively change \emph{one} element in each step. We extend this rule and allow the simultaneous change of \emph{two or more} elements in each step.
In reconfiguration problems involving changes to a vertex subset in a graph, we view such a vertex subset as a set of tokens. The token sliding rule ($\TS$) and the token jumping rule ($\TJ$) have been addressed in the literature. For each step, only one token can slide along an edge under $\TS$, and it can jump to any vertex under $\TJ$.
We consider two reconfiguration rules $\kTS$ and $\kTJ$, which are generalizations of $\TS$ and $\TJ$, respectively. Each of them allows at most $k$ tokens to be moved simultaneously.
These reconfiguration rules may provide a way to reach the desired solution, even if the original rule could not (see also \Cref{fig:enter-labelkTJ}).

\begin{figure}
    \centering
    \includegraphics[width=0.5\linewidth]{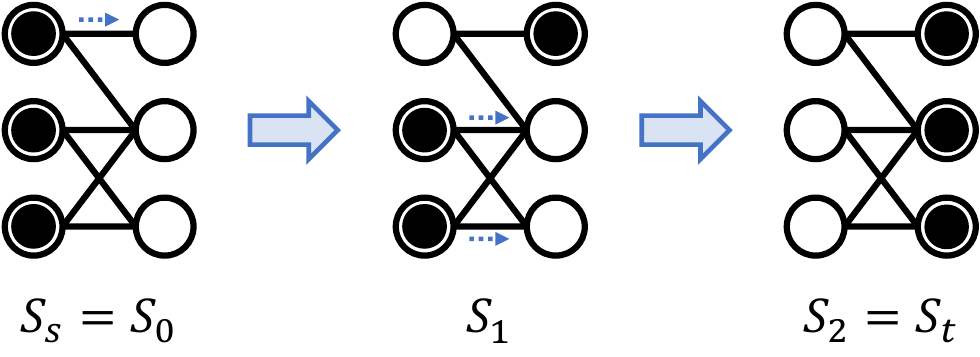}
    \caption{\rev{A sequence $\langle S_s=S_0, S_1, S_2=S_t \rangle$ of independent sets within the same graph cannot be achieved under either $\TS$ or $\TJ$, but it can be achieved under both $\numTS{2}$ and $\numTJ{2}$.} Tokens corresponding to the independent sets are marked in black.}
    \label{fig:enter-labelkTJ}
\end{figure}

\subsection{Our Contribution}
This paper analyzes the computational complexity of a comprehensive reconfiguration problem, namely the \prb{Induced Subgraph Isomorphism Reconfiguration} problem ({\ISIsoR} for short), under the extended rules.
Let $\mathscr{G}$ and $\mathscr{H}$ be graph classes. In {\ISIsoR}, we are given two graphs $G\in \mathscr{G}$ and $H\in \mathscr{H}$, called a \emph{\ggraph} and a \emph{\hgraph}, respectively, two vertex subsets $S_s$ and $S_t$ of $G$, each of which induces the subgraph isomorphic to $H$, and a reconfiguration rule $\Rule$. The objective is to determine whether there is a sequence $\sigma = \langle S_s = S_0, S_1, ..., S_\ell = S_t \rangle$ of vertex subsets of $G$ that satisfies the following two conditions: $S_{i}$ induces the subgraph isomorphic to $H$ for each $i \in \{0, ..., \ell\}$; and $S_{i-1}$ and $S_i$ follow the rule $\Rule$ for each $i \in \{1, ..., \ell\}$.
Here, \prb{ISR} can be viewed as a special case of {\ISIsoR} where $H$ is a null graph (i.e., a graph with no edges).

We first show that, for any additive graph class $\mathscr{H}$, there exists a fixed positive integer $k_{\mathscr{H}}$ such that, for any integer $k\geq k_{\mathscr{H}}$, {\ISIsoR} under $\Rule\in \{\kTS,\kTJ\}$ is \PSPACE-complete for $\mathscr{G}$ and $\mathscr{H}$.
Furthermore, the problem remains {\PSPACE}-complete even when $\mathscr{G}$ is the class of perfect graphs. In contrast, if $k=|V(H)|$ for a {\hgraph} $H\in \mathscr{H}$, {\ISIsoR} under $\kTJ$ is trivially solvable in linear time. We thus focus on the case where $k$ is slightly smaller than $|V(H)|$. To be precise, we consider a parameter $\mu = |V(H)|- k$, assuming $k < |V(H)|$.
Such a parameter, so-called a \emph{{\rm (}below{\rm )} guaranteed value}, was introduced by Mahajan and Raman~\cite{DBLP:journals/jal/MahajanR99} and has been studied in the literature.
For {\ISIsoR} under $\kTJ$, we give negative and positive meta-theorems when parameterized by $\mu = |V(H)|- k$ for a {\hgraph} $H\in \mathscr{H}$.
The negative result is that, for the graph class $\mathscr{G}$ of general graphs and any additive graph class $\mathscr{H}$, if the \prb{Induced Subgraph Isomorphism} problem ({\ISIso} for short) is {\NP}-complete for $\mathscr{G}$ and $\mathscr{H}$, then {\ISIsoR} under $\kTJ$ is also {\NP}-complete for $\mathscr{G}$ and $\mathscr{H}$ when $\mu$ is any fixed positive integer. The positive result is that, if {\ISIso} is solvable in polynomial time for a hereditary graph class $\mathscr{G}$ and a finite assorted graph class $\mathscr{H}$, then {\ISIsoR} under $\kTJ$ is in $\XP$ for $\mathscr{G}$ and $\mathscr{H}$ when parameterized by $\mu$.
As we will see later, a finite assorted graph class $\mathscr{H}$ is also additive. Combining the two meta-theorems, {\ISIsoR} under $\kTJ$ takes over the (in)tractability of {\ISIso} for any finite assorted graph class $\mathscr{H}$.

The results for {\ISIsoR} provide an interesting complexity change of \prb{ISR} under $\kTJ$. Let $I$ be an initial independent set of a given graph $G$. 
\prb{ISR} under $\kTJ$ is {\PSPACE}-complete when $k=\Theta(1)$; {\NP}-complete when $k=|I|-\Theta(1)$; and solvable in polynomial time when $k=|I|$. 
\rev{Although \prb{ISR} under $\kTJ$ on perfect graphs remains {\PSPACE}-complete, we show the problem on perfect graphs is in {\XP} when parameterized by $\mu=|I|-k$.}
The {\XP} algorithm is the best possible under a reasonable assumption Gap-ETH. 
We also study the complexity of \prb{ISR} under $\kTS$. We show that \prb{ISR} under $\kTS$ is essentially equivalent to \prb{ISR} under $\TS$ if a given graph is even-hole-free. This result indicates that, for any integer $k$ with $1\leq k \leq |I|$, 
\prb{ISR} under $\kTS$ is {\PSPACE}-complete even for split graphs and solvable in polynomial time for interval graphs.
\revc{In contrast to \prb{ISR} under $\kTJ$, it is unlikely that an XP algorithm exists when parameterized by $\mu = |I|-k$ for \prb{ISR} under $\kTS$ on split graphs, a subclass of perfect graphs.}

\revp{Proofs marked $(\ast)$ are postponed to Appendices. }

\subsection{Related Work}
Only a few studies have addressed extended reconfiguration rules.
Several \rev{works} have explored the \prb{Shortest Path Reconfiguration} problem under the rule that allows at most $k$ vertices of a shortest path to change simultaneously. This problem is known to be {\PSPACE}-complete when $k$ is constant~\cite{B13,DBLP:conf/walcom/DomonST024,GJKL21,KMM11}, while there is an {\FPT} algorithm when parameterized by $\mu=n/2-k\geq 0$, where $n$ is the number of vertices of an input graph~\cite{DBLP:conf/walcom/DomonST024}.
Additionally, the research of the \prb{ISR} problem has been conducted on how many vertices need to be changed simultaneously to ensure that any independent sets are reachable~\cite{DBLP:journals/dam/BergJM18}.

\prb{ISR} can be viewed as a special case of {\ISIsoR} where $H$ is a null graph. It is known that \prb{ISR} under $\Rule\in\{\TS,\TJ\}$
is {\PSPACE}-complete in general~\cite{HD05,IDHPSUU11,KMM12}, while the problem is solvable in polynomial time in several restricted cases~\cite{BB17,B16,DBLP:conf/swat/BonsmaKW14,BFHM21,DDFHIOOUY15,HU16,KMM12}. The problems under $\Rule\in\{\TS,\TJ\}$ in the cases where {\hgraph}s are paths or cycles have also been explored, and {\PSPACE}-completeness results established~\cite{DBLP:journals/tcs/HanakaIMMNSSV20}. In the case where {\hgraph}s are $(i,j)$-biclique, the problem under $\TJ$ is also {\PSPACE}-complete~\cite{DBLP:journals/tcs/HanakaIMMNSSV20}. Another line of research has shown that the problem under $\Rule\in\{\TS,\TJ\}$ remains {\PSPACE}-complete when a pattern graph comes from complete graphs (so-called \prb{Clique Reconfiguration})~\cite{DBLP:journals/dam/ItoOO23}. 
It is important to note that these results for {\ISIsoR} assume that exactly one vertex can be changed at a time.

\section{Preliminaries}\label{sec:Preliminaries}

For a positive integer $i$, we write $[i] = \{1,2,...,i\}$. 
Let $G=(V,E)$ be a finite, simple, and undirected graph with the vertex set $V$ and the edge set $E$. We denote by $V(G)$ and $E(G)$ the vertex set and the edge set of $G$, respectively. For a vertex $v$ of $G$, we denote by $N_G(v)$ the \emph{open neighborhood} of $v$, that is, $N_G(v)=\{u\in V\mid uv\in E\}$ and by $N_G[v]$ the \emph{closed neighborhood}, that is, $N_G[v]=N_G(v)\cup \{v\}$. 
For a vertex set $X\subseteq V(G)$, we define $N_G(X)=\{v\notin X\mid uv\in E(G), u\in X\}$ and $N_G[X]=N_G(X)\cup X$.
For two sets $X$ and $Y$, we denote by $X\bigtriangleup Y$ the symmetric difference between $X$ and $Y$, that is, $(X\setminus Y) \cup (Y\setminus X)$.
A subgraph of $G$ is a graph $G'$ such that $V(G')\subseteq V(G)$ and $E(G') \subseteq E(G)$.
For a subset $S\subseteq V(G)$, we denote by $G[S]$ the subgraph induced by $S$.
The \emph{complement} of $G$ is the graph $\overline{G}=(V,\overline{E})$ such that $uv \in \overline{E}$ if and only if $uv \notin E$ for any pair of vertices $u,v \in V$. 
\revc{A sequence $\langle v_0, v_1, \ldots, v_\ell \rangle_{G}$ of vertices of a graph $G$, where $v_{i-1}v_i \in E$ for each $i \in [\ell]$, is called a \emph{path} if all the vertices are distinct. It is called a \emph{cycle} if $v_0, v_1, \ldots, v_{\ell-1}$ are distinct and $v_0 = v_\ell$. The value $\ell$ is referred to as the \emph{length} of the path or cycle.
}
A graph $G$ is said to be \emph{connected} if there exists a path between $u$ and $v$ for any pair of vertices $u,v \in V$. 
The \emph{diameter} of a connected graph $G$ is the largest length of a shortest path between any two vertices of $G$.
A \emph{component} of $G$ is a maximal connected subgraph of $G$.
Let $\mathscr{C}_G$ denote the set of components of $G$.
For two graphs $G_1$ and $G_2$, we denote by $G_1+G_2$ the \emph{disjoint union} of $G_1$ and $G_2$, that is, $G_1 + G_2 = (V(G_1) \cup V(G_2) , E(G_1) \cup E(G_2))$.
For an integer $t$ and a graph $G$, we denote by $tG$ the disjoint union of $t$ copies of $G$.
For a graph $G$ and a vertex $u\in V(G)$, \emph{duplicating} $u$ is to add a new vertex $v$ and add a new edge $vw$ for each $w \in N_G(u)$.
Similarly, for a graph $G$ and a vertex subset $S\subseteq V(G)$, \emph{duplicating} $S$ is to duplicate each vertex in $S$.
For two vertex sets $A, B \subseteq V(G)$, \emph{joining} $A$ and $B$ is to add a new edge between each pair of vertices $u\in A$ and $v\in B$.
For two graphs $G$ and $F$, \emph{replacing} a vertex $v\in V(G)$ with a graph $F$ is to remove $v$ from $G$, add the vertices and edges in $F$ to $G$, and add new edges so that $N_G(v)=N_G(w)$ for each vertex $w\in V(F)$.
For a positive integer $i$, we denote by $K_i$ the complete graph with $i$ vertices.

A graph class is a (non-empty) family of graphs satisfying a certain property.
A graph class $\mathscr{G}$ is called \emph{hereditary} if any induced subgraph $G'$ of $G$ is in $\mathscr{G}$ for any graph $G \in \mathscr{G}$.
A graph class $\mathscr{G}$ is called \emph{additive} if $G_1 + G_2$ is in $\mathscr{G}$ for any two graphs $G_1, G_2 \in \mathscr{G}$.
We say that a graph class $\mathscr{G}$ is \emph{finite assorted} if there exists a family $\mathscr{F}$ of connected graphs with constant size such that, for any positive integer $\ell \leq|\mathscr{F}|$, any $\ell$ positive integers $t_1,t_2,\cdots,t_\ell$, and any $\ell$ graphs $F_1, F_2,..., F_\ell \in \mathscr{F}$, the graph $t_1F_1+t_2 F_2+\cdots+t_\ell F_\ell$ belongs to $\mathscr{G}$.
In other words, any graph in a finite assorted graph class $\mathscr{G}$ is obtained by the union of graphs in $\mathscr{F}$.
For example, the class of graphs of maximum degree $1$ is finite assorted, because there exists $\mathscr{F} = \{K_1, K_2\}$.

A \emph{hole} is an induced cycle with a length at least four. An \emph{odd hole} (resp.\ \emph{even hole}) is a hole with an odd (resp.\ even) length. A graph $G$ is \emph{odd-hole-free} (resp.\ \emph{even-hole-free}) if $G$ has no odd hole (resp.\ even hole) as an induced subgraph.
A graph $G$ is called a \emph{perfect graph} if both $G$ and $\overline{G}$ are odd-hole-free~\cite{Chudnovsky2002TheSP}.
A graph $G$ is called a \emph{bipartite graph} if $V(G)$ can be partitioned into two independent sets.
It is well known that every bipartite graph is a perfect graph.
 
\subsection{Our problems}
A graph $H$ is \emph{isomorphic} to a graph $G$ if there exists a bijection $\rho$ from $V(H)$ to $V(G)$ such that, for each $u,v \in V(H)$, $uv\in E(H)$ if and only if $\rho(u)\rho(v) \in E(G)$.
A graph $H$ is \emph{induced subgraph isomorphic} to $G$ if $H$ is isomorphic to some induced subgraph of $G$.
A vertex set $V'\subseteq V(G)$ is called an $H$-\emph{induced subgraph isomorphic set} if $H$ is isomorphic to $G[V']$.
Let $\mathscr{G}$ and $\mathscr{H}$ be graph classes.
The \prb{Induced Subgraph Isomorphism} problem ({\ISIso} for short) asks whether, given a \emph{\ggraph} $G\in \mathscr{G}$ and a \emph{\hgraph} $H\in \mathscr{H}$, there is an {\hISIS} of $G$.
\prb{Independent Set} and \prb{Clique}, which are well-known \NP-complete problems~\cite{GJ79}, are special cases of {\ISIso}.
We define a \rev{reconfiguration} variant of {\ISIso}, that is, \prb{Induced Subgraph Isomorphism Reconfiguration} ({\ISIsoR} for short). In the problem, we are given a {\ggraph} $G\in \mathscr{G}$, a {\hgraph} $H\in \mathscr{H}$, two {\hISIS}s $S_s\subseteq V(G)$ and $S_t\subseteq V(G)$, and a \rev{reconfiguration} rule $\Rule$. Then, the problem asks whether there exists a reconfiguration sequence $\sigma=\langle S_s=S_0,S_1,...S_{\ell}=S_t \rangle$ of {\hISIS}s and any two consecutive sets in $\sigma$ follow the reconfiguration rule $\Rule$.

For two vertex subsets $S$ and $S^\prime$ of $G$, the change from $S$ to $S^\prime$ can be viewed as the move of tokens on the vertices in $S \bigtriangleup S^\prime$.
In this paper, we consider the following extended reconfiguration rules for {\ISIsoR}.
\begin{itemize}
    \item \textbf{$k$-Token Jumping}~($\kTJ$): at most $k$ tokens on vertices in $G$ can be moved simultaneously.
    \item \textbf{$k$-Token Sliding}~($\kTS$): at most $k$ tokens on vertices in $G$ can be moved simultaneously and each token can be moved to an adjacent vertex in $G$.
\end{itemize}
For two {\hISIS}s $S$ and $S'$, if we can change from $S$ to $S'$ by applying $\kTJ$ (resp. $\kTS$) at a time, then we call that $S$ and $S'$ are \emph{adjacent} under $\kTJ$ (resp. $\kTS$).
If there exists a reconfiguration sequence between $S$ and $S'$ under $\kTJ$ (resp. $\kTS$), then we call that $S$ and $S'$ are \emph{reconfigurable} under $\kTJ$ (resp. $\kTS$).
Note that, when $k=1$, $\kTJ$ and $\kTS$ are equivalent to $\TJ$ and $\TS$, which are fundamental and broadly studied reconfiguration rules in the field of combinatorial reconfigurations.

{\ISIsoR} under $\kTJ$ is trivial if $k$ is the number of tokens, as we can move all tokens on an {\hISIS}.

\section{Induced Subgraph Isomorphism Reconfiguration} \label{sec:ISIR}

\subsection{PSPACE-completeness}\label{ISIRPSPACEcomp}
In this subsection, we show that for a sufficiently large positive constant $k$, {\ISIsoR} under $\Rule\in \{\kTS,\kTJ\}$ is \PSPACE-complete even for any additive graph class $\mathscr{H}$.
The \PSPACE-hardness of {\ISIsoR} under $\Rule\in \{\kTS,\kTJ\}$ is provided by a polynomial-time reduction from the \prb{$\word$-Word Reachability} problem, which is known to be \PSPACE-complete~\cite{W18}. 
\rev{Our reduction is inspired by the work of Kamiński et al., which proved the \PSPACE-hardness of \prb{ISR} for perfect graphs~\cite{KMM12}}.
Given a pair $\word=(\Sigma, A)$, where $\Sigma$ is a set of symbols and $A\subseteq \Sigma^2$ is a binary relation between symbols, a string over $\Sigma$ is a \emph{$\word$-word} if every two consecutive symbols are in $A$. 
The \prb{$\word$-Word Reachability} problem asks whether two given $\word$-words $w_s,w_t$ of equal length can be transformed into one another by changing one symbol at a time so that all intermediate strings are also $\word$-words.

\begin{theorem}\label{sub-reachPSPACE}
    Let $\mathscr{G}$ be the graph class of general graphs, and let $\mathscr{H}$ be any additive graph class. Then there exists a fixed positive integer $k_{\mathscr{H}}$ depending on $\mathscr{H}$ such that, for any integer $k\geq k_{\mathscr{H}}$, {\ISIsoR} under $\Rule\in \{\kTS,\kTJ\}$ is \PSPACE-complete for $\mathscr{G}$ and $\mathscr{H}$.
\end{theorem}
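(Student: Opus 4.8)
The plan is to reduce from the \prb{$\word$-Word Reachability} problem, which is \PSPACE-complete. Fix an arbitrary member $H_0\in\mathscr{H}$ and set $h=|V(H_0)|$; we may take $k_{\mathscr{H}}=2h$ (possibly enlarged to absorb the gadget below). Given $k\ge k_{\mathscr{H}}$ and an instance $(\word=(\Sigma,A),w_s,w_t)$ with $|w_s|=|w_t|=\length$, I first choose the largest integer $p$ with $ph\le k$; since $k\ge 2h$ this forces $ph\in(k/2,k]$. The \hgraph{} will be $H=(\length p)H_0$, which lies in $\mathscr{H}$ by additivity. The role of this choice is purely quantitative: rewriting one symbol will cost exactly $ph\le k$ token moves and is therefore legal, whereas rewriting two symbols at once would cost $2ph>k$ moves and is forbidden under either $\kTJ$ or $\kTS$.

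Next I build the \ggraph{} $G$. For each position $i\in[\length]$ and symbol $a\in\Sigma$ I create a block $B_{i,a}$ inducing $p$ disjoint copies of $H_0$ (so $ph$ vertices); selecting the whole block encodes ``symbol $a$ at position $i$''. I then attach forbidder gadgets so that (i) two distinct blocks $B_{i,a},B_{i,a'}$ at the same position cannot be (even partly) selected together, forcing one symbol per position, and (ii) blocks $B_{i,a},B_{i+1,b}$ at consecutive positions with $(a,b)\notin A$ cannot coexist, encoding the relation $A$; compatible consecutive blocks receive no connecting edges and thus remain clean disjoint copies. A counting argument---the pattern has exactly $\length p$ copies, each block offers exactly $p$, and only whole blocks may be taken cleanly---yields a bijection between the \hISIS{}s of $G$ and the $\word$-words of length $\length$, with $S_s,S_t$ spelling $w_s,w_t$. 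Combined with the token count, any $\kTJ$ step that alters the configuration swaps exactly one block, i.e.\ rewrites exactly one symbol into one compatible with its neighbours---precisely a legal \prb{$\word$-Word Reachability} step---and conversely each word step is one block swap; hence $S_s$ reconfigures to $S_t$ iff $w_s$ reaches $w_t$. For $\kTS$ I additionally lay a copy-by-copy perfect matching between $B_{i,a}$ and $B_{i,a'}$, so a swap is performed by sliding all $ph\le k$ tokens one edge each in a single step; this matching doubles as a forbidder, since selecting both blocks would contract matched pairs and destroy the clean copy structure. Membership in \PSPACE{} is routine for both rules: a reconfiguration sequence can be guessed step by step storing only the current \hISIS{} and a counter, so the problem lies in $\mathsf{NPSPACE}=\PSPACE$.

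The main obstacle is the design of the forbidder gadgets. Because $H_0$ is an arbitrary fixed member of $\mathscr{H}$---possibly disconnected and edge-containing---naively joining two blocks can manufacture spurious induced copies of a component of $H_0$ bridging them (a full join between $K_2$-blocks, for instance, creates new edges and hence new $K_2$'s), which would corrupt the bijection with $\word$-words. I therefore expect the delicate part to be attaching auxiliary structure that \emph{poisons} every mixed or partial selection---so that its induced subgraph fails to be a disjoint union of copies of $H_0$---without ever creating a new induced copy of $H_0$, and doing so uniformly in $H_0$ while remaining compatible with the sliding matchings needed for $\kTS$. Verifying both directions of the correspondence (no spurious \hISIS{} arises, and every legal swap stays within $k$ jumps or slides) is where the real work lies; the quantitative token bookkeeping and the \PSPACE{} membership are then straightforward.
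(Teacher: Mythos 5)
Your skeleton matches the paper's: the same reduction from \prb{$\word$-Word Reachability}, the same choice of a block size $ph\in(k/2,k]$ so that one symbol rewrite costs at most $k$ token moves while two rewrites cost more than $k$, and the same pattern graph $(\length p)H_0\in\mathscr{H}$ by additivity. But the one thing you explicitly leave open --- the forbidder gadget --- is the entire content of the proof, and the design criteria you set for it point in the wrong direction. You ask for auxiliary structure that poisons mixed selections ``without ever creating a new induced copy of $H_0$''; that requirement is essentially impossible to meet uniformly in $H_0$ (as your own $K_2$ example shows) and, more importantly, it is not what is needed. The paper simply takes the \emph{complete join} between any two incompatible blocks: it builds a layered graph $G'$ whose layers are cliques over $\Sigma$ with forbidden consecutive pairs joined, and then replaces each vertex by $t$ disjoint copies of $F$ (so every vertex of one block becomes adjacent to every vertex of an incompatible block). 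This join does create new edges and new small induced subgraphs locally, but correctness never needs to rule those out vertex-locally. Instead it uses a global counting argument: every component of $H=ntF$ has at most $|V(F)|$ vertices, so any token set meeting two completely joined blocks induces a \emph{connected} subgraph; since each layer must carry exactly $t|V(F)|>|V(F)|$ tokens (by counting across the $n$ layers), tokens in a layer cannot straddle two blocks, hence each layer's tokens fill exactly one whole block, and forbidden consecutive pairs are excluded for the same reason. That is the missing idea in your write-up: the bijection with $\word$-words is forced by component sizes and connectivity, not by carefully avoiding spurious copies.

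Two further concrete problems. First, your $\kTS$ gadget --- a copy-by-copy perfect matching between same-position blocks, claimed to ``double as a forbidder'' --- does not forbid anything in general: for $H_0=K_1$ a matching between $B_{i,a}$ and $B_{i,a'}$ still admits an independent set containing unmatched vertices from both blocks, so the one-symbol-per-position invariant collapses. The complete join solves sliding for free (every token is adjacent to every vertex of the target block) while simultaneously acting as the forbidder. Second, a minor point: the paper fixes $F$ as a smallest graph of $\mathscr{H}$ and takes $t$ a power of two with $t|V(F)|\le k<2t|V(F)|$; your arbitrary $H_0$ and maximal $p$ would also work for the counting, but as written your proposal cannot be checked because the graph $G$ is never actually defined. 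The \PSPACE{} membership remark is fine.
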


\begin{proof}[Proof~(Sketch)]
    We only give the construction of our instance.
    \revp{A formal proof of \Cref{sub-reachPSPACE} will be given in \Cref{appendixISIR}.}
    
    Consider an instance $(w_s,w_t)$ of the \prb{$\word$-Word Reachability} problem where $W=(\Sigma,A)$. 
    We will construct an instance $(G,H,S_s,S_t,\Rule)$ of {\ISIsoR} under $\Rule\in \{\kTS,\kTJ\}$~(see also \Cref{fig:enter-labelPSPACE}). Let $\Sigma=\{\sigma_1,\sigma_2,...,\sigma_{|\Sigma|}\}$ and $n=|w_s|=|w_t|$. Let $F\in \mathscr{H}$ be a graph with the smallest size in $\mathscr{H}$. Note that $F$ is independent of the instance of {\ISIsoR} and hence fixed. We set $k$ to an arbitrary positive integer at least $k_{\mathscr{H}} = 2|V(F)|$. Let $m$ be a positive integer such that $2^m|V(F)|\leq k < 2^{m+1}|V(F)|$ and $t=2^m$.
    We construct a graph $G'$ as follows. The vertex set $V(G')$ is the union of $n$ vertex sets $L_1, L_2,..., L_n$ such that each vertex set is with size $|\Sigma|$ and is a clique of $G'$. We call each of the vertex sets $L_1, L_2,..., L_n$ a \emph{layer}.
    \revp{Each vertex in a layer corresponds to a symbol in $\Sigma$. We denote by $v_i$ the vertex corresponding to the symbol $\sigma_i$ for some $i\in [|\Sigma|]$.}
    For a positive integer $i\in [n-1]$ and two positive integers $j,j'\in[|\Sigma|]$, two vertices $v_j\in L_{i}$ and $v_{j'}\in L_{i+1}$ are joined by an edge if and only if $(\sigma_j,\sigma_{j'})\notin A$. 
    Then, $G$ is a graph obtained from $G'$ by replacing $v_j\in L_i$ with a graph $tF_i^j$, where $F_i^j$ is isomorphic to $F$ for each pair of $i\in [n]$ and $j\in[|\Sigma|]$.
    This completes the construction of $G$.
    Let $H=ntF$. Note that $H \in \mathscr{H}$ because $F \in \mathscr{H}$ holds and $\mathscr{H}$ is additive. For a $\word$-word $w$ and the graph $G$, we associate $w$ with an $H$-induced subgraph isomorphic set $S_w$ in $G$ as follows.
    Denote by $w[i]$ the $i$-th symbol of $w$.
    For each $i\in[n]$, we add tokens on a vertex set $V(tF_i^j)$ if $w[i]=\sigma_j$ for $j\in[|\Sigma|]$. Lastly, let $S_s=S_{w_s}$ and $S_t=S_{w_t}$. 

    \revc{Lastly, we can show that $(w_s,w_t)$ is a yes-instance of \prb{$\word$-Word Reachability} if and only if $(G,H,S_s,S_t,\Rule)$ is a yes-instance of {\ISIsoR}.}
    
\end{proof}

\begin{figure}
    \centering
    \includegraphics[width=0.8\linewidth]{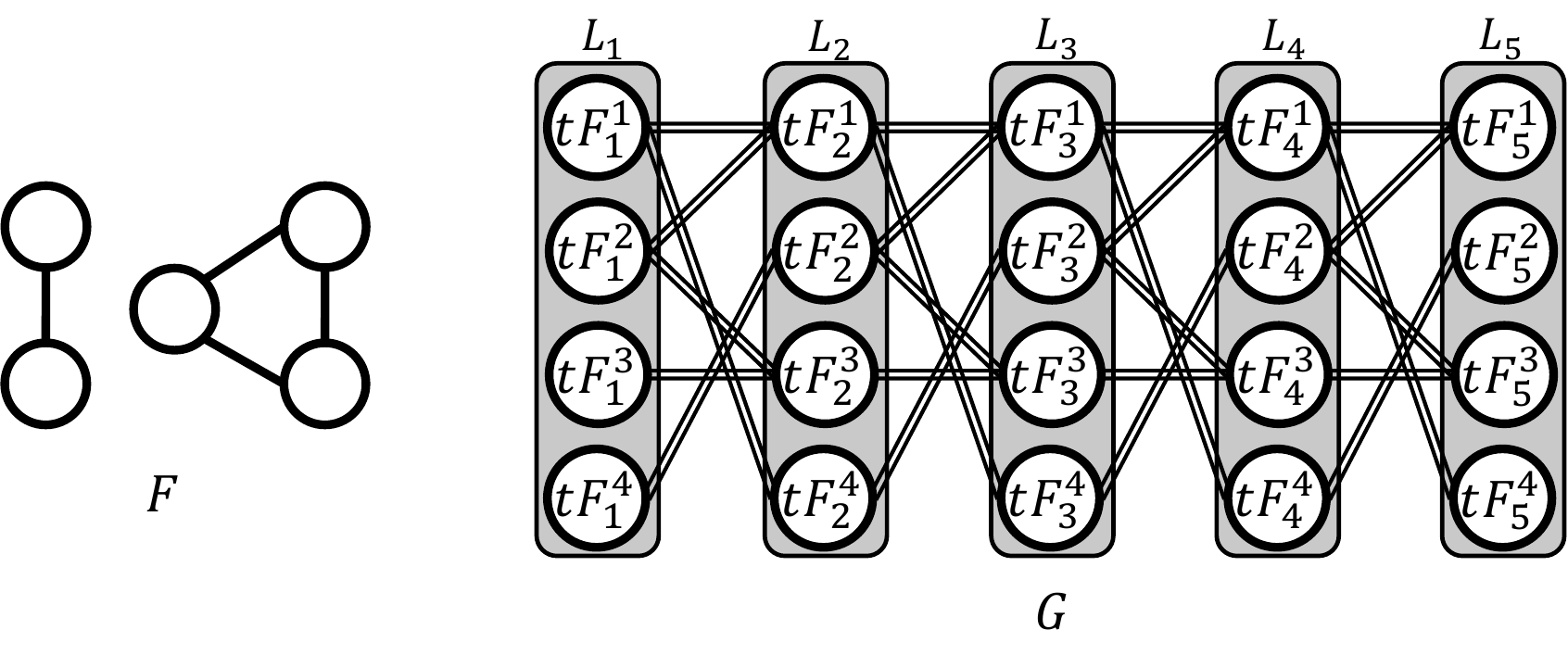}
    \caption{\rev{An example of constructing $G$ from $\word=(\Sigma,A)$ and $(w_s,w_t)$ such that $\Sigma=\{a,b,c,d\}$, $A=\{(a,b),(a,c),(b,b),(b,d),(c,a),(c,b),(c,d),(d,a),(d,c),(d,d)\}$, and $|w_s|=|w_t|=5$. Each node in the layers corresponds to a subgraph isomorphic to $tF$, and each double line represents all possible edges between the two subgraphs.}}
    \label{fig:enter-labelPSPACE}
\end{figure}

From our construction of $G$ in \Cref{sub-reachPSPACE}, we have the following corollary. 

\begin{corollary}[$\ast$]\label{sub-reachPSPACEperfect}
    Let $\mathscr{G}$ be the graph class of perfect graphs, and let $\mathscr{H}$ be any additive graph class that contains some perfect graph. Then there exists a positive integer $k_{\mathscr{H}}$ depending on $\mathscr{H}$ such that, for any integer $k\geq k_{\mathscr{H}}$, {\ISIsoR} is \PSPACE-complete under $\Rule\in \{\kTS,\kTJ\}$ for $\mathscr{G}$ and $\mathscr{H}$.
\end{corollary}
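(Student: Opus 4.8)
The plan is to reuse, almost verbatim, the reduction from \prb{$\word$-Word Reachability} built in the proof of \Cref{sub-reachPSPACE}, changing only the choice of the base graph $F$. Since $\mathscr{H}$ is additive and contains some perfect graph, I would fix $F\in\mathscr{H}$ to be such a perfect graph and set $k_{\mathscr{H}}=2|V(F)|$. Minimality of $F$ is never used in the correctness of the reduction (it only served to optimize $k_{\mathscr{H}}$), so every step of \Cref{sub-reachPSPACE} carries over unchanged: the equivalence between the two instances and membership in \PSPACE{} are inherited directly, and $H=ntF$ still lies in $\mathscr{H}$ by additivity. The only new obligation is to certify that the \ggraph{} $G$ produced by the construction is a perfect graph, so that the instance indeed lies in the class $\mathscr{G}$ of perfect graphs.

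To discharge this, I would separate $G$ into its skeleton $G'$ and the substituted gadgets. Recall that $G$ is obtained from $G'$ by replacing every vertex with a copy of $tF$, and that replacing a vertex of a graph by a graph (in the sense defined in \Cref{sec:Preliminaries}) is exactly vertex substitution. As $F$ is perfect, its disjoint union $tF$ is perfect; hence, invoking the classical substitution lemma for perfect graphs (Lovász) one vertex at a time, $G$ is perfect as soon as $G'$ is perfect. The whole problem therefore reduces to proving that the layered clique graph $G'$ is perfect.

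For $G'$ I would argue via the Strong Perfect Graph Theorem~\cite{Chudnovsky2002TheSP}, showing that neither $G'$ nor $\overline{G'}$ contains an odd hole. In $G'$, two vertices of a common layer are adjacent (each layer is a clique) while vertices of non-consecutive layers are non-adjacent, so along any induced cycle consecutive vertices stay within the same or an adjacent layer, and any layer meeting the cycle contributes at most two vertices, which must then be consecutive on the cycle. Contracting equal consecutive layer indices turns the cycle into a closed walk with steps $\pm 1$ on the sequence of layer indices; revisiting a layer would place two cycle vertices of that layer in non-consecutive positions and hence create a chord, so the reduced walk uses distinct layers only, which for a closed $\pm 1$ walk forces length two, and a short check shows the only holes of $G'$ are $4$-cycles, all even. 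For $\overline{G'}$ the decisive feature is that vertices of layers whose indices differ by at least two are mutually adjacent: a vertex in the lowest layer of a hole is then adjacent to every hole vertex at least two layers higher, and symmetrically for the highest layer, so if the extreme layers of the hole differ by at least three in index these two facts already account for all hole vertices and bound the length by four. A hole whose extreme layers differ by at most one lies in two consecutive layers and is bipartite, hence even; the sole remaining case, extreme layers differing by exactly two, is settled by a short argument, namely that the two outer layers cannot both contribute two hole vertices (four of them would induce a $C_4$ inside the longer cycle) while the middle layer is independent, which again forces length four. Since the odd antiholes of $G'$ are precisely the odd holes of $\overline{G'}$, it follows that $G'$ has neither an odd hole nor an odd antihole and is therefore perfect.

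The hard part will be the perfectness of $G'$, and within it the analysis of $\overline{G'}$: the adjacencies between consecutive layers are governed by the arbitrary relation $A$ coming from the word-reachability instance, so no structural hypothesis on them is available. The leverage I would rely on is purely metric, namely that far-apart layers are complete to one another in $\overline{G'}$; this confines any hole to a bounded band of layers and, through the degree bound forced by the two extreme layers, caps its length at four independently of $A$. Once $G'$ is known to be perfect, the substitution lemma and the inherited correctness of \Cref{sub-reachPSPACE} immediately give that {\ISIsoR} under $\Rule\in\{\kTS,\kTJ\}$ is \PSPACE-complete for perfect graphs and $\mathscr{H}$.
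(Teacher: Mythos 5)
Your proposal is correct and follows the same skeleton as the paper's proof: run the reduction of \Cref{sub-reachPSPACE} with a perfect base graph $F$, observe $H=ntF\in\mathscr{H}$ by additivity, and lift perfection from the skeleton $G'$ to $G$ via Lov\'asz's substitution lemma. (The paper packages your ``pick a perfect $F$'' step slightly differently, by passing to the subclass $\mathscr{H}'$ of perfect members of $\mathscr{H}$, noting it is additive, and re-invoking \Cref{sub-reachPSPACE} for $\mathscr{H}'$; your observation that minimality of $F$ is never used, only its fixedness, is accurate and achieves the same effect.) The genuine divergence is in how $G'$ is shown to be perfect. The paper decomposes along clique cutsets: $L_2,\ldots,L_{n-1}$ are clique cutsets, every atom of the decomposition lives inside two consecutive layers and is hence co-bipartite, and the clique-cutset composition theorem (cited from Kami\'nski et al.) then yields perfection. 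You instead argue directly that every hole of $G'$ and every hole of $\overline{G'}$ has length four or is even, which matches the paper's SPGT-style definition of perfection without importing the decomposition machinery. Your hole analysis is sound: the $\pm 1$ layer-index walk argument for $G'$ and the ``distant layers are complete in $\overline{G'}$'' argument for the antiholes both check out. The trade-off is length and care in the case analysis: in the $q-p=2$ case for $\overline{G'}$ your stated reason (both extreme layers contributing two vertices would induce a $C_4$ inside a longer hole) only rules out one configuration, and you still need the sub-case where an extreme layer contributes a single vertex, in which the remaining hole vertices are forced into the independent middle layer (or the independent middle layer plus at most two vertices of the other extreme layer) and cannot form the required induced path of length at least two; this is easy but should be written out. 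Both routes are valid; the paper's is shorter at the price of two black-box structural facts, yours is elementary and self-contained.
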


It should be \rev{noted} that many graph classes are additive and usually contain some perfect graph, such as $K_1$. \Cref{sub-reachPSPACEperfect} indicates that, for a sufficiently large constant $k$, {\ISIsoR} is \PSPACE-complete under $\Rule\in \{\kTS,\kTJ\}$ for perfect graphs $\mathscr{G}$ and any ``natural'' graph class $\mathscr{H}$.

\subsection{NP-completeness}\label{subsec:iso_NP-completeness}
Recall that {\ISIsoR} \rev{under $\kTJ$} is trivially solvable in linear time if $k \ge |V(H)|$, where $H$ is a pattern graph.
Assuming $k < |V(H)|$, \Cref{subsec:iso_NP-completeness,subsec:iso_XP} are devoted to {\ISIsoR} parameterized by $\mu =|V(H)|- k\geq1$.
In this subsection, we show the hardness result. 
Before providing it, we here define the notion of a \emph{reconfiguration graph}, which will also be used to show the tractability of {\ISIsoR} for special cases.

Let $(G,H,S_s,S_t,\kTJ)$ be an instance of {\ISIsoR}.
We define the \emph{reconfiguration graph} $\mathcal{C=(V,E)}$ for {\ISIsoR} as follows. 
Each vertex in $\mathcal{V}$ corresponds to an {\hISIS} $S$ of $G$ with size exactly $|V(H)|$. 
We call each vertex of $\mathcal{C}$ a \emph{node} in order to distinguish it from a vertex of $G$.
We denote by $w_S$ a node corresponding to an {\hISIS} $S$.
Then two {\hISIS}s $S$ and $S'$ of $G$ are adjacent under $\kTJ$ if and only if the two corresponding nodes $w_S$ and $w_{S'}$ are joined by an edge in the reconfiguration graph.

\begin{theorem}\label{sub-reach-inNP}
    Let $\mathscr{G}$ be the graph class of general graphs, let $\mathscr{H}$ be any additive graph class, and let $\mu$ be any fixed positive integer. If {\ISIso} is \NP-complete for $\mathscr{G}$ and $\mathscr{H}$, then {\ISIsoR} is \NP-complete under $\kTJ$ for $\mathscr{G}$ and $\mathscr{H}$ when $k=|V(H)|-\mu\geq1$ for a pattern graph $H\in \mathscr{H}$.
\end{theorem}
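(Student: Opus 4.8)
The plan is to prove both membership in \NP{} and \NP-hardness. Membership I would obtain from a polynomial bound on the diameter of each component of the reconfiguration graph $\mathcal{C}$. Fix a shortest reconfiguration sequence $\langle S_0,S_1,\dots,S_\ell\rangle$ from $S_s$ to $S_t$. For indices with $|i-j|\ge 2$ the nodes $w_{S_i},w_{S_j}$ must be non-adjacent, since otherwise the sequence could be shortcut; for distinct equal-size \hISIS{}s non-adjacency under $\kTJ$ means $|S_i\cap S_j|\le\mu-1$. Because $k=|V(H)|-\mu\ge 1$ gives $|S_i|=|V(H)|\ge\mu+1$, every $S_i$ has a $\mu$-subset, and assigning to each even-indexed set one such subset yields pairwise distinct $\mu$-subsets of $V(G)$ (two equal ones would force an intersection of size $\ge\mu$); hence the number of even-indexed sets is at most $\binom{n}{\mu}$, and likewise for odd indices, where $n=|V(G)|$. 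Consequently $\ell=O(n^{\mu})$, which is polynomial for fixed $\mu$. A nondeterministic algorithm then guesses such a sequence together with, for every $S_i$, a bijection certifying $G[S_i]\cong H$, and checks in polynomial time that the isomorphisms hold and that consecutive sets differ in at most $k$ vertices.

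For \NP-hardness I would reduce from \ISIso{} for $\mathscr{G}$ and $\mathscr{H}$, which is \NP-complete by assumption. Given $(G,H)$, put $H''=(2\mu+1)H$; this belongs to $\mathscr{H}$ since $\mathscr{H}$ is additive, and it satisfies $k=|V(H'')|-\mu$. I build $G''$ (a general graph, hence in $\mathscr{G}$) from a disjoint copy of $G$, the \emph{search region}, and a gadget on four pairwise disjoint vertex sets $P,Q,R_s,R_t$: each of $P,Q$ induces $\mu$ mutually non-adjacent copies of $H$, and each of $R_s,R_t$ induces $\mu+1$ such copies. I then add all edges between $R_s$ and $Q$, between $R_t$ and $P$, between $R_s$ and $R_t$, between $R_s$ and $V(G)$, and between $R_t$ and $V(G)$, while leaving $P,Q$ non-adjacent to each other and to $V(G)$, leaving $P$ non-adjacent to $R_s$, and leaving $Q$ non-adjacent to $R_t$. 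Setting $S_s=P\cup R_s$ and $S_t=Q\cup R_t$, both sets induce $(2\mu+1)H=H''$ and are disjoint, so $w_{S_s}$ and $w_{S_t}$ are non-adjacent in $\mathcal{C}$.

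The forward direction is immediate: if $G$ contains a copy $C$ of $H$, then $S^{\ast}=P\cup Q\cup C$ induces $H''$ (as $P$, $Q$, and $C$ are mutually non-adjacent), and $|S^{\ast}\cap S_s|=|P|\ge\mu$ together with $|S^{\ast}\cap S_t|=|Q|\ge\mu$ makes $\langle S_s,S^{\ast},S_t\rangle$ a reconfiguration sequence. For the converse I would argue that if $G$ has no copy of $H$ then the only sets inducing $H''$ in $G''$ are $S_s$ and $S_t$: the edges between $R_s$ and $V(G)$ and between $R_t$ and $V(G)$ force any set that meets $V(G)$ and induces $H''$ to contain an induced copy of $H$ inside the search region, while the edges between $R_s$ and $Q$, between $R_t$ and $P$, and between $R_s$ and $R_t$ rule out every collection of $2\mu+1$ pairwise non-adjacent copies of $H$ within the gadget other than $S_s$ and $S_t$. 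Since $S_s$ and $S_t$ are then non-adjacent and isolated, the instance is a no-instance.

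I expect the main obstacle to be exactly this rigidity analysis: the dense joins, and especially the complete join between $R_s$ and $R_t$, might themselves host induced copies of $H$ that straddle two parts, which could create spurious bridges linking $S_s$ and $S_t$ without any copy of $H$ in the search region. Establishing that no such unintended copy participates in a set inducing $H''$—so that the reconfiguration graph is genuinely disconnected whenever $G$ is a no-instance—is the technical heart of the reduction, and it may require exploiting the component structure of $H$ or marking the designed copies with auxiliary vertices so that they are the only induced $H$-subgraphs relevant to an \hISIS.
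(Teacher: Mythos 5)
Your \NP-membership argument is correct and essentially the paper's: both bound the length of a shortest reconfiguration sequence by $O(n^{\mu})$ using the fact that two equal-size {\hISIS}s are adjacent under $\kTJ$ exactly when they share at least $\mu$ vertices, so that each $\mu$-subset of $V(G)$ is used only a bounded number of times along a shortest path.

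The hardness reduction, however, has a genuine gap, and you flag it yourself: the backward direction rests on the claim that, when $G$ contains no induced copy of $H$, the only sets inducing $H''=(2\mu+1)H$ in $G''$ are $S_s$ and $S_t$, and this claim is never proved. The cases where such a set $S$ meets $R_s$ or $R_t$ can indeed be dispatched by a degree argument (a vertex of $S\cap R_s$ is adjacent to all of $S\cap(Q\cup R_t\cup V(G))$, which must therefore fit inside a single component of $H''$ and hence have fewer than $|V(H)|$ vertices; combining this with the symmetric bound forces $S=S_s$ or $S=S_t$). But the remaining case $S\subseteq P\cup Q\cup V(G)$ is not routine when $H$ is disconnected: a component of $G''[S]$ lying in $P$ need not be one of the designed copies of a component of $H$ (for $H=K_1+K_2$, a single endpoint of a $K_2$ in $P$ serves as a $K_1$-component), so a naive count does not immediately force $S\cap V(G)$ to contain an induced copy of all of $H$; one needs a multiset argument over isomorphism types of components that handles the case where one component of $H$ embeds as an induced subgraph of another. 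Until that is supplied, the reduction is incomplete. The paper avoids this issue entirely: its gadget (blocks $A,B,X,Y\cong 2H'$ and $H^*\cong H'$ arranged on a fixed six-vertex template) makes no uniqueness claim about {\hISIS}s; instead it takes a shortest reconfiguration sequence, shows that the penultimate set $S_{\ell-1}$ avoids $A\cup B$, and then runs an explicit token-redistribution procedure that transforms $S_{\ell-1}$ into an {\hISIS} whose restriction to $G'$ induces $H'$. To repair your proof you must either establish the rigidity claim in full (including disconnected $H$), or weaken it to what you actually need, namely that no {\hISIS} other than $S_s$ and $S_t$ lies in their component of the reconfiguration graph.
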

\begin{proof}[Proof~(Sketch)]
    We give the proof sketch of membership in {\NP} and the construction of our instance for showing {\NP}-hardness.
    \revp{A formal proof of \Cref{sub-reach-inNP} will be given in \Cref{appendixISIR}.}

    \revc{To show that {\ISIsoR} under $\kTJ$ is in {\NP} when $k=|V(H)|-\mu$, we prove the diameter of each component of the reconfiguration graph $C$ is $O(n^\mu)$, which implies that the length of a shortest reconfiguration sequence between any pair of two reconfigurable {\hISIS}s is $O(n^\mu)$.}
    
    We next show that {\ISIsoR} for $\mathscr{G}$ and $\mathscr{H}$ is {\NP}-hard if {\ISIso} for $\mathscr{G}$ and $\mathscr{H}$ is \NP-hard by a polynomial-time reduction from {\ISIso} to {\ISIsoR}. 
    Let $(G',H')$ be an instance of {\ISIso} with $2|V(H')| \geq \mu$.
    Even with this constraint on the instance, {\ISIso} remains \NP-hard because if $2|V(H')| < \mu$, then {\ISIso} is solvable in polynomial time by enumerating all sets of constant size $|V(H')| < \mu/2$.
    We will construct an instance $(G,H,S_s,S_t,\kTJ)$ of {\ISIsoR} under $\kTJ$ from $(G',H')$.

    We construct $G$ from $G'$ as follows (see also \Cref{fig:enter-labelNP}).
    Let $A$, $B$, $X$, and $Y$ be four graphs such that each of them is isomorphic to $2H'$, and let $H^*$ be a graph isomorphic to $H'$.
    Let $G_0$ be a graph such that $V(G_0)=\{g',a,b,h^*,x,y\}$ and $E(G_0)=\{g'a,g'b,ab,ah^*,bh^*,ax,by,xy\}$. Let $G$ be a graph obtained from $G_0$ by replacing $g',a,b,h^*,x$ and $y$ with $G',A,B,H^*,X$, and $Y$, respectively.  
    This completes the construction of $G$.
    Finally, set $H=4H'$, $S_s=V(A) \cup V(Y)$, and $S_t=V(B) \cup V(X)$.
    
    \revc{Lastly, we can prove that $(G',H')$ is a yes-instance of {\ISIso} if and only if $(G,H,S_s,S_t,\kTJ)$ is a yes-instance of {\ISIsoR}. }
\end{proof}

\begin{figure}
    \centering
    \includegraphics[width=0.8\linewidth]{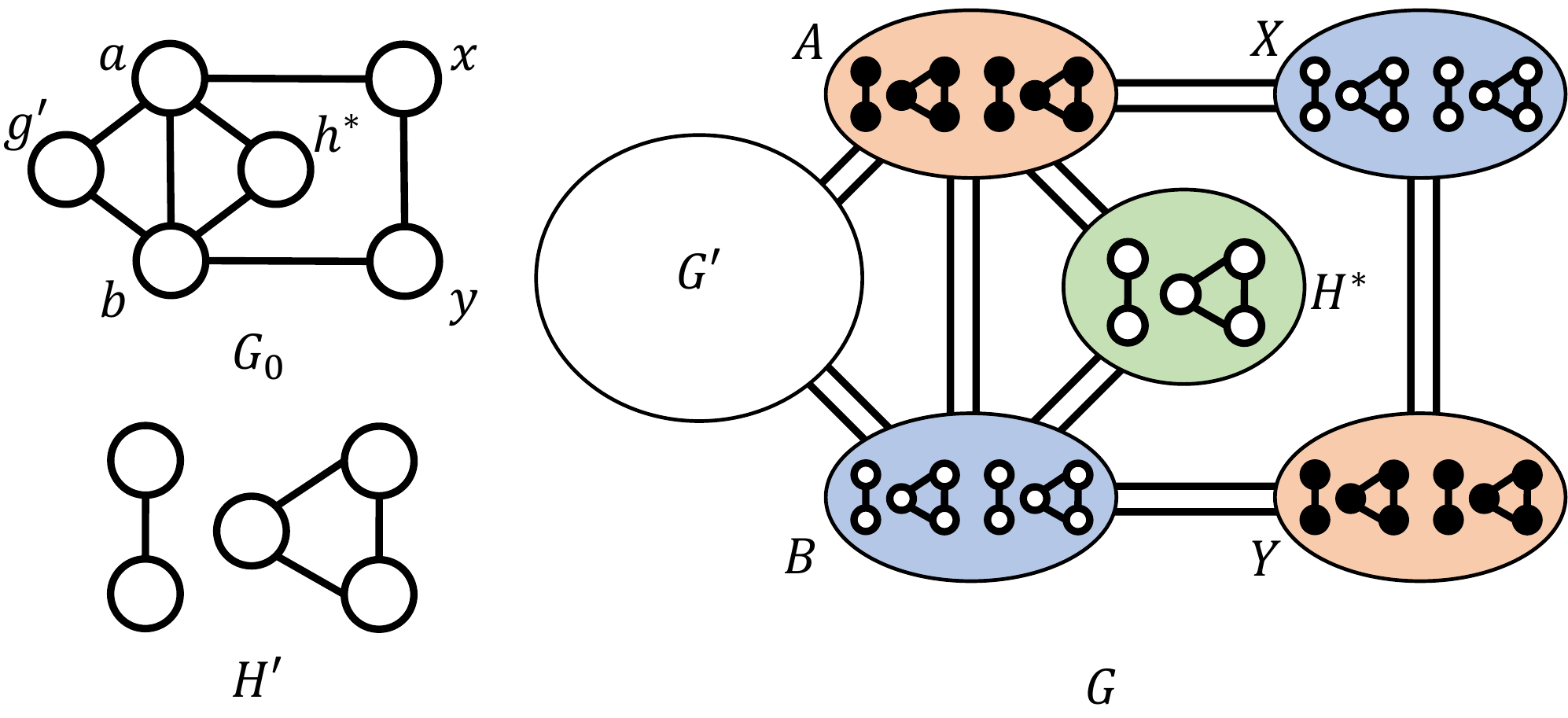}
    \caption{Illustration of our construction of $G$. The double lines represent all possible edges between the two subgraphs. The black tokens are placed on the initial {\hISIS} $S_s$.}
    \label{fig:enter-labelNP}
\end{figure}

Similar to \Cref{sub-reachPSPACEperfect}, we can show that the above hardness holds even for perfect graphs. 

\begin{corollary}[$\ast$]\label{sub-reachNPperfect}
    Let $\mathscr{G}$ and $\mathscr{H}$ be \rev{the graph class of perfect graphs and the additive graph subclass of perfect graphs}, and let $\mu$ be any fixed positive integer. If {\ISIso} is \NP-complete for $\mathscr{G}$ and $\mathscr{H}$, then {\ISIsoR} is \NP-complete under $\kTJ$ for $\mathscr{G}$ and $\mathscr{H}$ when   $k=|V(H)|-\mu\geq1$ for a pattern graph $H\in \mathscr{H}$.
\end{corollary}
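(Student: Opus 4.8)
The plan is to reuse the reduction and the membership argument of \Cref{sub-reach-inNP} essentially verbatim, adding only the verification that the constructed host graph remains perfect. Membership in \NP{} is immediate: since every perfect graph is a general graph, the $O(n^\mu)$ bound on the diameter of each component of the reconfiguration graph established in \Cref{sub-reach-inNP} applies here as well, so a shortest reconfiguration sequence serves as a polynomial-size certificate.

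For the hardness, I would start from an instance $(G',H')$ of {\ISIso} in which $G'$ is a perfect graph, $H'\in\mathscr{H}$, and $2|V(H')|\ge\mu$; the last restriction preserves \NP-hardness of {\ISIso}, since instances with $2|V(H')|<\mu$ are solved by enumerating constant-size vertex subsets. I then apply exactly the construction of \Cref{sub-reach-inNP}: build $G_0$ on $\{g',a,b,h^*,x,y\}$, obtain $G$ by replacing these six vertices with $G'$, $A\cong 2H'$, $B\cong 2H'$, $H^*\cong H'$, $X\cong 2H'$, and $Y\cong 2H'$, and set $H=4H'$, $S_s=V(A)\cup V(Y)$, and $S_t=V(B)\cup V(X)$. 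Since $\mathscr{H}$ is additive and $H'\in\mathscr{H}$, we have $H=4H'\in\mathscr{H}$; moreover $ay,bx\notin E(G_0)$, so $G[S_s]=A+Y$ and $G[S_t]=B+X$ are both isomorphic to $4H'$, confirming that $S_s$ and $S_t$ are {\hISIS}s. The equivalence ``$(G',H')$ is a yes-instance of {\ISIso} if and only if $(G,H,S_s,S_t,\kTJ)$ is a yes-instance of {\ISIsoR}'' then carries over unchanged from \Cref{sub-reach-inNP}.

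The only genuinely new ingredient, and the crux, is that $G$ is perfect. Every graph substituted into a vertex of $G_0$ is perfect: $G'$ by assumption, and $A,B,X,Y\cong 2H'$ together with $H^*\cong H'$ because $H'\in\mathscr{H}$ is perfect and a disjoint union of perfect graphs is perfect. As the ``replacing'' operation of \Cref{sec:Preliminaries} is precisely vertex substitution, and the class of perfect graphs is closed under substitution (the substitution lemma for perfect graphs, classically due to Lov\'asz), it suffices to prove that $G_0$ itself is perfect. Because $|V(G_0)|=6$, the only odd hole that could appear in $G_0$ or in $\overline{G_0}$ has length $5$, so I would finish with a short case analysis over the six $5$-vertex subsets: in $G_0$ each such subset contains a vertex of degree different from $2$ (any subset keeping both $a$ and $b$ inherits the edge $ab$ together with a further neighbor of $a$ or $b$, while dropping $a$ or $b$ pulls $g'$ or $h^*$ down to degree one), and in $\overline{G_0}$ the vertices $a$ and $b$ have degree $1$, forcing any $5$-cycle into the four-vertex set $\{g',h^*,x,y\}$. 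Hence neither graph contains an induced $C_5$, and $G_0$ is perfect.

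The main obstacle is exactly this perfectness check on $G_0$; once it is granted, the substitution lemma propagates perfection to $G$ for free, and every remaining step is inherited from \Cref{sub-reach-inNP}. The principal risk is a slip in the finite case analysis, which is contained by the two structural observations that $ab\in E(G_0)$ whereas $a$ and $b$ are degree-one vertices of $\overline{G_0}$.
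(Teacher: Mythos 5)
Your proposal is correct and follows essentially the same route as the paper: reuse the reduction and \NP-membership argument of \Cref{sub-reach-inNP} unchanged, and establish perfectness of $G$ by applying Lov\'asz's substitution lemma (\Cref{replacingperfect}) to the perfect building blocks $G_0, G', A, B, H^*, X, Y$. The only difference is that the paper dismisses the perfectness of the six-vertex graph $G_0$ as ``easy to verify,'' whereas you carry out the explicit (and correct) case analysis ruling out induced $C_5$'s in $G_0$ and $\overline{G_0}$.
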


\subsection{XP algorithms} \label{subsec:iso_XP}

\begin{theorem} \label{the:ISIR_XP}
    Let $\mathscr{G}$ be any hereditary graph class, and let $\mathscr{H}$ be any finite assorted graph class. If {\ISIso} can be solved in polynomial time for $\mathscr{G}$ and $\mathscr{H}$, then {\ISIsoR} under $\kTJ$ is in {\XP} for $\mathscr{G}$ and $\mathscr{H}$ when parameterized by $\mu=|V(H)|-k\geq 1$ for a pattern graph $H\in \mathscr{H}$.
\end{theorem}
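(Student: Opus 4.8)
The plan is to reduce the reachability question in the exponentially large reconfiguration graph $\mathcal{C}$ to a connectivity question in a much smaller auxiliary ``anchor graph'' that can be built with $n^{O(\mu)}$ calls to the polynomial-time {\ISIso} algorithm. The starting point is that, since $k=|V(H)|-\mu$, any two adjacent \hISIS{}s $S,S'$ in $\mathcal{C}$ satisfy $|S\cap S'|\ge\mu$ and hence share a $\mu$-element set; conversely, and crucially, \emph{all} \hISIS{}s containing a fixed set $P$ with $|P|=\mu$ are pairwise adjacent in $\mathcal{C}$, since any two of them already agree on the $\mu$ vertices of $P$ and are thus separated by at most $k$ tokens. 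Each such anchor $P$ therefore induces a clique in $\mathcal{C}$. I would define an auxiliary graph $\mathcal{D}$ whose nodes are the $\mu$-element subsets of $V(G)$ that extend to some \hISIS, joining $P$ and $P'$ exactly when a single \hISIS\ contains $P\cup P'$.

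Next I would prove the equivalence that $S_s$ and $S_t$ are reconfigurable under $\kTJ$ if and only if $|S_s\cap S_t|\ge\mu$, or there exist $\mu$-subsets $P\subseteq S_s$ and $Q\subseteq S_t$ lying in the same component of $\mathcal{D}$. For the forward direction, from a reconfiguration sequence $\langle S_0,\dots,S_\ell\rangle$ I pick $P_i\subseteq S_{i-1}\cap S_i$ of size $\mu$; consecutive anchors $P_i,P_{i+1}$ both lie inside $S_i$, so they are adjacent in $\mathcal{D}$, yielding a walk from an anchor of $S_s$ to one of $S_t$. For the backward direction, a walk $P_1,\dots,P_r$ with witnessing \hISIS{}s $T_i\supseteq P_i\cup P_{i+1}$ produces the sequence $\langle S_s,T_1,\dots,T_{r-1},S_t\rangle$, in which every consecutive pair shares a common anchor ($P_1,\dots,P_r$ respectively) and is therefore adjacent by the clique observation.

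The main obstacle, and the heart of the argument, is the edge-test subroutine applied to $R=P\cup P'$: given a set $R$ with $|R|\le 2\mu$, decide whether some \hISIS\ contains $R$. Here I would use both hypotheses. Because $\mathscr{H}$ is finite assorted, every component of any \hISIS\ has size at most some constant $c$, so the components of $G[S]$ that meet $R$ together span at most $c|R|=O(\mu)$ vertices. I would therefore guess this ``$R$-touching part'' $C\supseteq R$ by enumerating all $O(\mu)$-sized vertex sets, of which there are $n^{O(\mu)}$, and keeping those for which $G[C]$ is a disjoint union of members of the defining family $\mathscr{F}$ with every component meeting $R$. For each candidate $C$, the remaining tokens must induce the residual pattern $H'$ (again a union of members of $\mathscr{F}$, hence $H'\in\mathscr{H}$ by finite assortedness) inside $G':=G\setminus N_G[C]$, which lies in $\mathscr{G}$ because $\mathscr{G}$ is hereditary. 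Since no vertex of $G'$ is adjacent to $C$, any \hISIS\ for $H'$ in $G'$ combines with $C$ into an \hISIS\ for $H$ in $G$ containing $R$, and conversely; thus a single call to the polynomial-time {\ISIso} algorithm on $(G',H')$ decides each candidate, and the subroutine runs in $n^{O(\mu)}$ time.

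Finally I would assemble the pieces and count: there are $\binom{n}{\mu}=n^{O(\mu)}$ anchors, hence $n^{O(\mu)}$ pairs, each tested by the $n^{O(\mu)}$-time subroutine; computing the components of $\mathcal{D}$ and scanning the $\mu$-subsets of $S_s$ and $S_t$ for a common component are then dominated by $n^{O(\mu)}$, placing {\ISIsoR} under $\kTJ$ in {\XP} parameterized by $\mu$. I expect the delicate points to be the precise correctness of the subroutine, namely that choosing $C$ to be exactly the union of components of $G[S]$ meeting $R$ is both enumerated and makes the residual instance well defined, together with the bookkeeping guaranteeing $H'\in\mathscr{H}$ and $G'\in\mathscr{G}$ at every call.
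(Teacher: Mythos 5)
Your proposal is correct and follows essentially the same route as the paper: your anchor graph $\mathcal{D}$ is precisely the paper's clique-compressed reconfiguration graph, your reachability equivalence is its Proposition~\ref{claim:clique_nodes} (the two formulations agree because all $\mu$-subsets of a single {\hISIS} form a clique), and your edge-test subroutine --- enumerate the $O(\mu)$-sized union of pattern components meeting $R$, delete its closed neighborhood, and invoke the polynomial-time {\ISIso} algorithm on the residual instance, relying on hereditariness of $\mathscr{G}$ and finite assortedness of $\mathscr{H}$ --- is exactly how the paper constructs the edge set, with the same $n^{O(\mu)}$ accounting.
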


    We give an {\XP} algorithm that solves {\ISIsoR}.
    To this end, we will construct a \emph{clique-compressed reconfiguration graph} $\mathcal{C'=(V',E')}$, which is essentially equivalent to a reconfiguration graph $\mathcal{C=(V,E)}$ but has fewer nodes than $\mathcal{C}$. We call each vertex of $\mathcal{C'}$ a  \emph{clique-node} in order to distinguish it from a node of $\mathcal{C}$ and a vertex of $G$. We denote by $(G,H,S_s,S_t,\kTJ)$ an instance of {\ISIsoR} under $\kTJ$.

    We first define the clique-compressed reconfiguration graph $\mathcal{C'=(V',E')}$ as follows. 
    Clique-nodes in $\mathcal{V'}$ are in one-to-one correspondence with vertex sets of $V(G)$ of size exactly $\mu$. We denote by $w'_T$ the clique-node that is assigned a vertex set $T\subseteq V(G)$. Then two clique-nodes $w'_T$ and $w'_{T'}$ in $\mathcal{C'}$ are joined by an edge if and only if there exists an {\hISIS} $S$ such that $T\cup T'\subseteq S$.

    \begin{proposition}[$\ast$]\label{claim:clique_nodes}
        Let $w_{S_s}$ and $w_{S_t}$ be the two nodes in $\mathcal{C}$ corresponding to $S_s$ and $S_t$, respectively. There is a path in $\mathcal{C}$ connecting $w_{S_s}$ and $w_{S_t}$ if and only if there is a path in $\mathcal{C'}$ connecting two clique-nodes $w'_{I}$ and $w'_{J}$ in $\mathcal{C}'$ for any $I\subseteq S_s$ and $J\subseteq S_t$ with size exactly $\mu$.
    \end{proposition}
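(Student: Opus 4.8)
The plan is to exploit two structural facts linking $\mathcal{C}$ and $\mathcal{C'}$. First, since every {\hISIS} has size exactly $|V(H)|$ and $k = |V(H)| - \mu$, two {\hISIS}s $S$ and $S'$ are adjacent in $\mathcal{C}$ precisely when $|S \bigtriangleup S'| \le 2k$, which (as $|S| = |S'| = |V(H)|$) is equivalent to $|S \cap S'| \ge \mu$; that is, $w_S$ and $w_{S'}$ are adjacent iff $S$ and $S'$ share at least $\mu$ vertices. Second, for a single {\hISIS} $S$, every size-$\mu$ subset $T \subseteq S$ yields a clique-node $w'_T$, and all such clique-nodes are pairwise adjacent in $\mathcal{C'}$, since $S$ itself witnesses $T \cup T' \subseteq S$ for any two such subsets $T, T'$. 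Hence each {\hISIS} $S$ induces a clique among the clique-nodes indexed by its $\mu$-subsets; in particular, which size-$\mu$ subset $I \subseteq S_s$ (resp.\ $J \subseteq S_t$) we take is immaterial for connectivity in $\mathcal{C'}$, which is what justifies the phrase ``for any''. I would state this endpoint-independence explicitly before splitting into the two directions.

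For the forward direction I would take a path $w_{S_0}, w_{S_1}, \ldots, w_{S_\ell}$ in $\mathcal{C}$ with $S_0 = S_s$ and $S_\ell = S_t$. For each consecutive pair, the first fact gives $|S_{i-1} \cap S_i| \ge \mu$, so I may fix a size-$\mu$ subset $T_i \subseteq S_{i-1} \cap S_i$ and consider $w'_{T_i}$. Then $w'_I, w'_{T_1}, w'_{T_2}, \ldots, w'_{T_\ell}, w'_J$ is the desired walk in $\mathcal{C'}$: consecutive clique-nodes $w'_{T_i}, w'_{T_{i+1}}$ are both contained in $S_i$ and so adjacent, while $w'_I$ and $w'_{T_1}$ are both contained in $S_0 = S_s$ and $w'_{T_\ell}, w'_J$ are both contained in $S_\ell = S_t$, again using the clique fact. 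A walk between the two clique-nodes certifies that they lie in the same component, hence are joined by a path.

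For the backward direction I would reverse this construction. Given a path $w'_{T_0}, w'_{T_1}, \ldots, w'_{T_r}$ in $\mathcal{C'}$ with $T_0 = I$ and $T_r = J$, each edge $w'_{T_{i-1}} w'_{T_i}$ comes with a witnessing {\hISIS} $S^{(i)}$ satisfying $T_{i-1} \cup T_i \subseteq S^{(i)}$. I would then argue that $w_{S_s}, w_{S^{(1)}}, w_{S^{(2)}}, \ldots, w_{S^{(r)}}, w_{S_t}$ is a walk in $\mathcal{C}$: consecutive witnesses $S^{(i)}$ and $S^{(i+1)}$ both contain $T_i$, so they share at least $\mu$ vertices and are adjacent by the first fact; at the two ends, $S_s$ and $S^{(1)}$ both contain $I = T_0$, and $S^{(r)}$ and $S_t$ both contain $J = T_r$, giving the terminal edges.

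The argument is essentially a dictionary translation between the two graphs, so the bulk of the work is bookkeeping rather than a single hard step. The point that requires genuine care is the handling of the quantifier ``for any $I$ and $J$'': the clique observation of the first paragraph is exactly what makes the endpoints interchangeable, so that the forward direction produces a valid $\mathcal{C'}$-path for \emph{every} admissible choice of $I$ and $J$, whereas the backward direction needs only \emph{one} such path. Verifying that each consecutive pair in the constructed walks indeed satisfies the relevant containment (so that the edge conditions of $\mathcal{C}$ and $\mathcal{C'}$ hold) is the only place where one must be attentive.
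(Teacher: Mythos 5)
Your proposal is correct and follows essentially the same route as the paper: both directions translate a path in one graph into a walk in the other by selecting size-$\mu$ subsets $T_i \subseteq S_{i-1}\cap S_i$ (forward) or witnessing sets $S^{(i)} \supseteq T_{i-1}\cup T_i$ (backward), using the facts that $\kTJ$-adjacency of equal-size {\hISIS}s is equivalent to sharing at least $\mu$ vertices and that all $\mu$-subsets of a single {\hISIS} form a clique in $\mathcal{C'}$. Your explicit remark on the endpoint-independence justifying the ``for any'' quantifier is a useful clarification of a step the paper handles only implicitly.
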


    We now explain how to construct the clique-compressed reconfiguration graph $\mathcal{C'=(V',E')}$ without the reconfiguration graph $\mathcal{C}$. 
    
    For each vertex subset $S$ of $G$ with size exactly $\mu$, we create a clique-node in $\mathcal{C'}$ to which $S$ is assigned.
    We can construct the clique-node set $\mathcal{V'}$ in time $O(n^\mu)$ because $|\mathcal{V'}|=\tbinom{n}{\mu}=O(n^\mu)$\revc{, where $n=|V(G)|$}.

    We then construct the edge set $\mathcal{E'}$ of $\mathcal{C'}$. Consider two clique-nodes $w'_{A}$ and $w'_{B}$, where $A$ and $B$ are vertex sets of $G$ with $|A|=|B|=\mu$.
    Recall that the two clique-nodes $w'_{A}$ and $w'_{B}$  of  $\mathcal{C'}$ are joined by an edge if and only if there exists an {\hISIS} $S\subseteq V(G)$ such that $A\cup B\subseteq S$.
    In what follows, we show how to check whether such \rev{a} vertex subset $S$ exists.
    Let $C = A\cup B$. 
    Since $\mathscr{H}$ is a finite assorted graph class, $H \in \mathscr{H}$ can be denoted as $H=t_1F_1+t_2 F_2+\cdots+t_\ell F_\ell$ for some $\ell$ positive integers $t_1,t_2,\cdots,t_\ell$ and some graphs $F_1, F_2,..., F_\ell$ with constant sizes.
    For each vertex $v\in C$, we assign a graph $F_i$ for some $i\in [\ell]$ and we pick a set $W_v$ of $|F_i|$ vertices in $G$ such that $v \in W_v$ as a candidate for an {\gISIS{F_i}} in $G$. For $v_1,v_2,...,v_{|C|}\in C$, let $W=W_{v_1}\cup W_{v_2}\cup\cdots\cup W_{v_{|C|}}$. 
    Note that $A \cup B \subseteq W$.
    If $G[W]$ is not isomorphic to any components of $H$, then we abort the current choice and consider the next choice of $F_i$ and $W$.
    Suppose that $G[W]$ is isomorphic to a graph consisting of some components $F$ of $H$. Then we denote by $G'$ the graph obtained from $G$ by removing all vertices in $N_G[W]$. In addition, we denote by $H'$ the graph obtained from $H$ by removing the vertices of $F$.
    Since $\mathscr{G}$ is a hereditary graph class and $\mathscr{H}$ is a finite assorted graph class, we have $G' \in \mathscr{G}$ and $H' \in \mathscr{H}$.
    We solve the {\ISIso} for the two graphs $G'$ and $H'$ \revc{in time $g(n)$, where $g(n)$ is a polynomial in $n$}. We denote by $(G',H')$ the instance of {\ISIso}. If $(G',H')$ is a yes-instance, then $G'$ has an {\gISIS{H'}} $S_{H'}$. Since $G'$ is a graph obtained from $G$ by removing all vertices in $N_G[W]$, the graph $G$ has an {\hISIS} $W\cup S_{H'}$. Since $A \cup B \subseteq W$, in this case we have found an {\hISIS} $W\cup S_{H'}$ of $G$ such that $A \cup B \subseteq W\cup S_{H'}$. Therefore, the two clique-nodes $w'_{A}$ and $w'_{B}$ are joined by an edge. 
    If $(G',H')$ is a no-instance, we proceed to the next choice of $F_i$ and $W$.
    We consider all possible choices of $F_i$ and $W$.
    If there is no vertex set $S$ such that $A \cup B \subseteq S$, we conclude that $w'_{A}$ and $w'_{B}$ are not joined by an edge.
    In this way, we construct $\mathcal{E'}$ by applying the above procedure for any two clique-nodes. 

    We lastly determine whether $S_s$ and $S_t$ are reconfigurable under $\kTJ$ by checking whether there exists a path between two clique-node $w'_A$ and $w'_B$ for arbitrarily chosen vertex sets $A \subseteq S_s$ and $B \subseteq S_t$ of $G$ with size exactly $\mu$. The correctness follows from \Cref{claim:clique_nodes}. 

    \revc{The total running time of this algorithm is $O(n^\mu+\ell^{2\mu} n^{2\mu (f_{\max}+1)} g(n)+n^{2\mu})$, where $f_{\max}$ denotes the largest number of vertices among the graphs $F_1,F_2,...,F_\ell$, and $g(n)$ is a polynomial in $n$.}
    \revp{The detailed explanation is provided in \Cref{appendixISIR}.}
    This completes the proof of \Cref{the:ISIR_XP}.


\section{Independent Set Reconfiguration}\label{sec:ISR}

As applications of the results in \Cref{sec:ISIR}, we discuss \prb{ISR} under $\kTJ$.
Given a graph $G$ and a positive integer $k$, the \textsc{Independent Set} problem (\prb{IS} for short) asks whether $G$ has an independent set of size at \revc{least} $k$.
\prb{IS} is a well-known {\NP}-complete problem.
Recall that \prb{ISR} is a reconfiguration variant of \prb{IS} and is equivalent to {\ISIsoR} when $\mathscr{H}$ is the class of all null graphs.
Combined with the \PSPACE-completeness under $\TJ$, \Cref{sub-reachPSPACEperfect} asserts the following corollary.

\begin{corollary}
    For any fixed integer $k\geq 1$, ISR under $\kTJ$ is {\PSPACE}-complete for perfect graphs.
\end{corollary}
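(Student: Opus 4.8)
The plan is to establish both membership and hardness, and to cover every fixed $k\ge 1$ by combining the specialization of \Cref{sub-reachPSPACEperfect} to the class of null graphs with the classical single-token result, as anticipated by the sentence preceding the statement. Membership in {\PSPACE} is uniform in $k$: recall that \prb{ISR} is the case of {\ISIsoR} in which $\mathscr{H}$ is the class of all null graphs, every configuration is an independent set of polynomial size, and so reachability between $S_s$ and $S_t$ in the reconfiguration graph under $\kTJ$ is decidable in nondeterministic polynomial space, hence in {\PSPACE} by Savitch's theorem. It thus remains to prove {\PSPACE}-hardness on perfect graphs for each fixed $k\ge 1$.

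For the range $k\ge 2$, I would apply \Cref{sub-reachPSPACEperfect} with $\mathscr{H}$ taken to be the class of all null graphs. This class meets the two hypotheses of the corollary: it is additive, since the disjoint union of two edgeless graphs is again edgeless, and it contains a perfect graph, for instance $K_1$ (indeed every null graph is perfect, as it contains no hole and its complement is a complete graph, which also contains no hole). The point on which the argument turns is the value of the threshold $k_{\mathscr{H}}$. In the construction underlying \Cref{sub-reachPSPACE} one chooses $F$ to be a graph of smallest size in $\mathscr{H}$ and sets $k_{\mathscr{H}}=2|V(F)|$; for the null-graph class this smallest graph is $F=K_1$, so $k_{\mathscr{H}}=2$. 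Consequently \Cref{sub-reachPSPACEperfect} already certifies that \prb{ISR} under $\kTJ$ is {\PSPACE}-complete on perfect graphs for \emph{every} integer $k\ge 2$.

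It then remains only to treat the boundary case $k=1$. Here $\kTJ$ is exactly the ordinary token jumping rule $\TJ$, and I would invoke the known fact that \prb{ISR} under $\TJ$ is {\PSPACE}-complete on perfect graphs~\cite{KMM12}. Combining this single case with the previous paragraph settles all fixed integers $k\ge 1$, proving the statement. I do not expect a genuine obstacle: the only delicate point is verifying that the null-graph class satisfies the hypotheses of \Cref{sub-reachPSPACEperfect} and, crucially, that its threshold evaluates to $k_{\mathscr{H}}=2$ rather than something larger, since this is exactly what lets the large-$k$ regime of the corollary meet the classical $k=1$ result with no gap left uncovered.
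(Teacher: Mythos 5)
Your proposal is correct and follows essentially the same route as the paper, which likewise obtains the corollary by combining \Cref{sub-reachPSPACEperfect} (instantiated with $\mathscr{H}$ the class of null graphs, where the construction's threshold is $k_{\mathscr{H}}=2|V(K_1)|=2$) with the classical {\PSPACE}-completeness of \prb{ISR} under $\TJ$ on perfect graphs for the $k=1$ case. Your write-up merely makes explicit the membership argument and the evaluation of $k_{\mathscr{H}}$ that the paper leaves implicit.
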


Let $\mu$ be a fixed positive integer, and let $I$ be an initial independent set of given graph $G$.
When $k=|I|-\mu$, since \prb{IS} is {\NP}-complete for general graphs and the class $\mathscr{H}$ of null graphs is finite assorted (more generally, additive), the \NP-completeness of \prb{ISR} under $\kTJ$ follows from \Cref{sub-reach-inNP}.

 \begin{corollary}
     Let $\mu$ be \rev{any} fixed positive integer. \prb{ISR} under $\kTJ$ is \NP-complete for general graphs when $k=|I|-\mu\geq 1$, where $I$ is an initial independent set of a given graph $G$.
 \end{corollary}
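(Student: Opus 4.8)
The plan is to derive this directly from \Cref{sub-reach-inNP} by instantiating the pattern class $\mathscr{H}$ as the class of all null graphs and $\mathscr{G}$ as general graphs. To invoke that theorem I must check its two hypotheses for this choice: that $\mathscr{H}$ is additive, and that {\ISIso} is \NP-complete for $\mathscr{G}$ and $\mathscr{H}$.

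First I would verify additivity. A null graph has no edges, and the disjoint union of two edgeless graphs is again edgeless, so the class of null graphs is closed under disjoint union and hence additive (indeed finite assorted, witnessed by $\mathscr{F}=\{K_1\}$). Next I would identify {\ISIso} with \prb{IS} for this pattern class. When the pattern graph $H$ is the null graph on $p$ vertices, a subset $S\subseteq V(G)$ with $G[S]$ isomorphic to $H$ is precisely an independent set of $G$ of size $p$; thus an \hISIS{} exists if and only if $G$ has an independent set of size $p$. Consequently {\ISIso} for general host graphs and null pattern graphs is exactly the \prb{IS} problem, which is \NP-complete. This discharges both hypotheses.

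Applying \Cref{sub-reach-inNP} then yields that {\ISIsoR} under $\kTJ$ is \NP-complete for general graphs and null graphs whenever $k=|V(H)|-\mu\geq 1$. It remains only to translate back into the language of \prb{ISR}: the $\kTJ$ rule is literally the same in both formulations, and under the correspondence above the \hISIS{}s of size $|V(H)|$ are exactly the independent sets of size $|V(H)|$. Setting $|V(H)|=|I|$, where $I$ is the initial independent set, makes the two reconfiguration problems coincide, giving \NP-completeness of \prb{ISR} under $\kTJ$ when $k=|I|-\mu\geq 1$.

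I do not expect a substantive obstacle, since the claim is a straightforward specialization of \Cref{sub-reach-inNP}. The only point demanding care is confirming that the pattern-to-independent-set correspondence preserves not merely the hardness direction but the entire reconfiguration structure: the null-graph \hISIS{}s must match independent sets of the same size as feasible solutions, and adjacency under $\kTJ$ must match as well, so that both membership in \NP{} (via the $O(n^{\mu})$ diameter bound used in \Cref{sub-reach-inNP}) and \NP-hardness transfer verbatim.
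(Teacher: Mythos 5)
Your proposal is correct and matches the paper's own argument: the paper likewise obtains this corollary by specializing \Cref{sub-reach-inNP} to $\mathscr{G}$ the class of general graphs and $\mathscr{H}$ the class of null graphs, noting that this class is additive (indeed finite assorted) and that {\ISIso} then coincides with the \NP-complete \prb{IS} problem. The translation back to \prb{ISR} is immediate since the paper defines \prb{ISR} as exactly the special case of {\ISIsoR} with a null pattern graph.
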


In contrast, it is known that \prb{IS} is solvable in polynomial time for perfect graphs~\cite{GROTSCHEL84ISforPerfect}.
Furthermore, the class of perfect graphs is hereditary.
Therefore, \Cref{the:ISIR_XP} gives the following tractability.

 \begin{corollary} \label{cor:ISR_XP}
    \prb{ISR} under $\kTJ$ is in {\XP} for perfect graphs when parameterized by $\mu = |I| - k\geq 1$, where $I$ is an initial independent set of a given graph $G$.
 \end{corollary}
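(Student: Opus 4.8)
The plan is to derive \Cref{cor:ISR_XP} directly from the meta-theorem \Cref{the:ISIR_XP}. Recall that \prb{ISR} under $\kTJ$ is exactly {\ISIsoR} under $\kTJ$ in the special case where the {\hgraph} class $\mathscr{H}$ is the class of all null graphs, and here the {\ggraph} class $\mathscr{G}$ is the class of perfect graphs. To invoke \Cref{the:ISIR_XP} it therefore suffices to check its three hypotheses for this choice of $\mathscr{G}$ and $\mathscr{H}$: that $\mathscr{G}$ is hereditary, that $\mathscr{H}$ is finite assorted, and that {\ISIso} is solvable in polynomial time for $\mathscr{G}$ and $\mathscr{H}$. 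Once these hold, the theorem yields that {\ISIsoR} under $\kTJ$, and hence \prb{ISR} under $\kTJ$, is in {\XP} when parameterized by $\mu = |V(H)| - k$.

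First I would verify the three hypotheses. The class of perfect graphs is hereditary: both being odd-hole-free and taking complements are preserved under passing to induced subgraphs, so the defining condition on $G$ and $\overline{G}$ is inherited by every induced subgraph. The class of null graphs is finite assorted, witnessed by the single-graph family $\mathscr{F} = \{K_1\}$, since every null graph is of the form $t K_1$ for some positive integer $t$. Finally, for a null {\hgraph} $H = t K_1$, a vertex set $V' \subseteq V(G)$ is an {\hISIS} precisely when $V'$ is an independent set with $|V'| = t = |V(H)|$; because every subset of an independent set is independent, such a set exists if and only if the maximum independent set of $G$ has size at least $t$. Since a maximum independent set of a perfect graph can be computed in polynomial time~\cite{GROTSCHEL84ISforPerfect}, {\ISIso} is solvable in polynomial time for perfect graphs and null graphs.

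With the hypotheses verified, \Cref{the:ISIR_XP} applies and places the problem in {\XP} with parameter $\mu = |V(H)| - k$. It remains only to match this parameter to the statement of the corollary: every feasible solution of the \prb{ISR} instance is an independent set of size exactly $|V(H)|$, so the initial independent set satisfies $|I| = |V(H)|$ and thus $\mu = |I| - k$, as required. There is no substantive obstacle here, since the algorithmic content lives entirely in \Cref{the:ISIR_XP}; the only care needed is the routine translation between the \prb{IS}/\prb{ISR} formulation and the {\ISIso}/{\ISIsoR} formulation, together with the observation that solving {\ISIso} for null patterns reduces to a single maximum-independent-set computation.
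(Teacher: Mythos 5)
Your proposal is correct and matches the paper's own derivation: the paper likewise obtains \Cref{cor:ISR_XP} by invoking \Cref{the:ISIR_XP} with $\mathscr{G}$ the (hereditary) class of perfect graphs and $\mathscr{H}$ the (finite assorted) class of null graphs, using the polynomial-time solvability of \prb{IS} on perfect graphs~\cite{GROTSCHEL84ISforPerfect}. Your write-up simply spells out the hypothesis checks in more detail than the paper does.
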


We here show that, for perfect graphs, the {\XP} algorithm by \Cref{cor:ISR_XP} is the best possible under a reasonable assumption.
This result is based on Gap-ETH \cite{DBLP:journals/eccc/Dinur16,DBLP:conf/icalp/ManurangsiR17}. For a positive integer $q$, in the $q$-SAT problem, we are given a $q$-CNF formula $\phi$ in which each clause consists of at most $q$ literals. Our goal is to decide whether $\phi$ is satisfiable. Max $q$-SAT is a maximization \rev{variant} of $q$-SAT. For a positive integer $q$, the \prb{Max $q$-SAT} problem takes as input a $q$-CNF formula $\phi$ and our goal is to compute the maximum number $\mathsf{SAT}(\phi)$ of clauses in $\phi$ that can be simultaneously satisfied. 
The Gap-ETH can now be expressed in terms of $\mathsf{SAT}$ as follows.

\begin{conjecture}[Gap-Exponential Time Hypothesis (Gap-ETH)~\cite{DBLP:journals/eccc/Dinur16,DBLP:conf/icalp/ManurangsiR17}]
    For some constant $\delta,\epsilon>0$, no algorithm can, given a \prb{3-SAT} formula $\phi$ on $n$ variables and $m=O(n)$ clauses, distinguish between the following cases correctly with probability at least $2/3$ in time $O(2^{\delta n})$: {\rm (}i{\rm )} $\mathsf{SAT}(\phi)=m$ and {\rm (}ii{\rm )} $\mathsf{SAT}(\phi)<(1-\epsilon)m$.
\end{conjecture}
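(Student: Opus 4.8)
The plan is to recognize that the final statement is a \emph{conjecture}---the Gap-Exponential Time Hypothesis---rather than a theorem, and so there is no proof to supply in the usual sense. Gap-ETH is an unproven complexity-theoretic hypothesis adopted here, as throughout fine-grained complexity and the theory of hardness of approximation, purely as a working assumption; it plays the role of an axiom from which the downstream running-time lower bounds of this paper are derived \emph{conditionally}. Proving it unconditionally is not merely hard but beyond all known techniques: Gap-ETH implies the ordinary Exponential Time Hypothesis, which in turn implies $\mathsf{P}\neq\NP$, so any unconditional proof would settle the $\mathsf{P}$ versus $\NP$ question and a great deal more. Accordingly, the only honest ``proof proposal'' is to state that we assume the hypothesis, attribute it to its origin~\cite{DBLP:journals/eccc/Dinur16,DBLP:conf/icalp/ManurangsiR17}, and record the evidence that makes it a reasonable postulate for the conditional optimality claim following \Cref{cor:ISR_XP}.

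The evidence I would marshal, in order to \emph{justify adopting} the hypothesis rather than to prove it, is twofold. First, Gap-ETH is a natural strengthening of ETH obtained by inserting a constant multiplicative gap between the satisfiable case $\mathsf{SAT}(\phi)=m$ and the far-from-satisfiable case $\mathsf{SAT}(\phi)<(1-\epsilon)m$; it is known to follow from ETH together with the (still open) existence of probabilistically checkable proofs of \emph{linear} size with constant query complexity, and more generally can be based on a range of standard PCP assumptions or on certain average-case and cryptographic hypotheses. Second, no algorithm faster than brute force is known for the gap version of \prb{3-SAT} in the regime $m=O(n)$, exactly mirroring the situation for ETH itself; the hypothesis has withstood considerable scrutiny since its formulation and is by now a standard primitive for proving tight running-time and parameterized lower bounds.

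The main obstacle---and precisely the reason the statement is assumed rather than established---is the absence of unconditional lower-bound techniques. A proof would require both an exponential lower bound against arbitrary (and, as worded, even randomized) algorithms, which we cannot currently obtain for \emph{any} $\NP$-hard problem, together with an unconditional construction of sufficiently efficient gap-producing reductions, i.e.\ linear-size PCPs, which remains a central open problem in its own right. Since neither ingredient is within reach, the statement must remain a conjecture; its entire value in this work is as the premise of the conditional results that cite it.
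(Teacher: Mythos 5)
You are right: the statement is Gap-ETH itself, which the paper states as a \emph{conjecture} with citations to its sources and never attempts to prove, using it only as the hypothesis for the conditional lower bound in \Cref{noFPTunderGapETH}. Your treatment---declining to prove it, attributing it, and noting the evidence for adopting it---matches the paper's handling exactly.
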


Note that Gap-ETH with $\epsilon = 1/m$ means ETH (Exponential-Time Hypothesis), which is the widely accepted assumption in the field of parameterized complexity.
Under Gap-ETH, we show the following \Cref{noFPTunderGapETH}.

 \begin{theorem}\label{noFPTunderGapETH}
     Assuming Gap-ETH, \prb{ISR} under $\kTJ$ does not admit any
     {\FPT} algorithm for bipartite \rev{graphs} when parameterized by $\mu=|I|-k\geq 1$, where $I$ is an initial independent set of a given graph $G$.
 \end{theorem}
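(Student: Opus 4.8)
The plan is to prove the statement by a parameter-preserving reduction from a problem that admits no \FPT{} algorithm under Gap-ETH, turning a promise gap in that problem into a clean yes/no dichotomy for reconfigurability. The natural source is the \FPT-inapproximability of \prb{Maximum Independent Set} (equivalently \prb{Clique}): a standard consequence of Gap-ETH is that there is a constant $\epsilon\in(0,1)$ such that no algorithm running in time $f(k)\cdot N^{O(1)}$ can distinguish an $N$-vertex graph $\Gamma$ with $\alpha(\Gamma)\ge k$ from one with $\alpha(\Gamma)<\epsilon k$. This source is especially apt because its natural exact algorithm runs in $N^{O(k)}$ time, mirroring the $n^{O(\mu)}$ algorithm of \Cref{cor:ISR_XP}; ruling out \FPT{} for \prb{ISR} under $\kTJ$ is thus the reconfiguration analogue of ruling out \FPT{} approximation for \prb{Independent Set}.

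First I would fix the parameter correspondence. Given a gap-\prb{Independent Set} instance $(\Gamma,k)$, I would build in polynomial time a bipartite graph $G$ together with two independent sets $S_s,S_t$ of equal size $|I|$, and set $k=|I|-\mu$ with $\mu=\Theta(k)$, so that $(\Gamma,k)$ is a yes-instance exactly when $S_s$ reconfigures to $S_t$ under $\kTJ$, while a no-instance ($\alpha(\Gamma)<\epsilon k$) forces non-reconfigurability. Keeping $\mu$ linear in $k$ and the size of $G$ polynomial in $N$ is what makes the argument close: a hypothetical $f(\mu)\,n^{O(1)}$ algorithm for \prb{ISR} under $\kTJ$ on bipartite graphs would then decide the gap instance in time $f(\Theta(k))\,N^{O(1)}$, i.e.\ in \FPT{} time, contradicting the Gap-ETH lower bound.

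The core of the construction is a \emph{bottleneck gadget}. Note that under $\kTJ$ with $k=|I|-\mu$, any two consecutive sets $S_{i-1},S_i$ of a reconfiguration sequence satisfy $|S_{i-1}\cap S_i|\ge\mu$. I would exploit this invariant so that $S_s$ and $S_t$ lie in two separated ``regions'' whose tokens can be exchanged only by routing the sequence through an intermediate configuration that parks $\Theta(\mu)$ tokens on a selection gadget encoding $\Gamma$; such a parking is realizable by a valid independent set of $G$ precisely when $\Gamma$ contains an independent set of size above a threshold $\tau\in(\epsilon k,k]$. In the completeness direction, an independent set of size $k$ in $\Gamma$ yields the required intermediate sets and hence a reconfiguration sequence; in the soundness direction, the absence of any independent set of size $\epsilon k$ in $\Gamma$ means no valid intermediate configuration exists, so the two regions remain in distinct components of the reconfiguration graph. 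The gap is exactly what lets a single structural threshold separate these cases.

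The hard part will be realizing the selection gadget inside a \emph{bipartite} host while still forcing the sequence through a certificate of a large independent set of $\Gamma$, since independence in a general $\Gamma$ is not directly expressible by edges within one side of a bipartition. I expect to resolve this by working with the incidence (one-subdivision) graph of $\Gamma$, which is bipartite, or equivalently by reducing from a natively bipartite gap problem such as \prb{Maximum Balanced Biclique}, whose \FPT-inapproximability is also known under Gap-ETH and whose bipartite complement converts a balanced biclique into a balanced independent set that meshes cleanly with token jumping. The remaining care is to verify that the ``share at least $\mu$ tokens'' invariant genuinely funnels every reconfiguration sequence through the gadget, so that soundness is not broken by an unintended shortcut, and to confirm that $G$, $S_s$, $S_t$ stay bipartite and of polynomial size with $\mu=\Theta(k)$.
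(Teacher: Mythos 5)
Your plan converges on exactly the paper's route: an \FPT-reduction from \prb{Maximum Balanced Biclique} (hard under Gap-ETH when parameterized by the biclique size $b$), passing to the bipartite complement so that a balanced biclique becomes a balanced independent set, and a bottleneck architecture in which two large independent sets $S$ and $T$ are joined to opposite sides of the encoding of $\Gamma$ so that every reconfiguration sequence is forced, via the invariant $|S_{i-1}\cap S_i|\ge \mu$, to park tokens on a certificate. Setting $\mu=b$ and keeping the host polynomial-size is also what the paper does. So the approach is the right one.

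However, the proposal stops short of the one idea that makes the soundness direction close, and you yourself flag this as ``the remaining care.'' The difficulty is this: at the first step $i$ where $I_i\cap S=\emptyset$, the invariant only forces $\mu=b$ tokens to survive into $A$ and the remaining tokens to land somewhere in the $B$-side; it does \emph{not} force $b$ of them to land on a common independent set of $B$-vertices that is jointly independent with the $b$ tokens in $A$. The paper resolves this with a duplication-plus-pigeonhole device: it takes $c=|A|-b$ extra copies $B_1,\dots,B_c$ of $B$ and sizes $|S|=|T|=(c+2)b$, so that after $\ge b$ tokens are pinned in $A$ (which can absorb at most $|A|=c+b$ tokens), at least $(c+2)b-(c+b)=(c+1)(b-1)+1$ tokens must sit in $B_0\cup\dots\cup B_c$, and by pigeonhole some single copy $B_j$ receives $b$ of them, yielding the balanced biclique. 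Without this calibration of $|S|$, $|A|$, and the number of copies, a reconfiguration sequence could scatter its tokens across $B$ and certify nothing, so the soundness direction as described does not yet go through. That quantitative gadget design is the substantive missing content of your proof.
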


 \begin{proof}
    We perform an \revc{{\FPT}}-reduction from \prb{Maximum Balanced Biclique}. Our reduction is inspired by the \revc{{\FPT}}-reduction of Agrawal et al.\ in~\cite{DBLP:conf/iwpec/AgrawalAD21}, which proved that \prb{ISR} under $\TJ$ for bipartite graphs does not admit an {\FPT} algorithm \rev{parameterized by the number of tokens} under Gap-ETH. 
    In \prb{Maximum Balanced Biclique}, we are given a bipartite graph $G = (A\cup B, E)$ and an integer $\bicliquesize$.
    The goal is to decide whether $G$ contains a complete bipartite subgraph (biclique) with $\bicliquesize$ vertices on each side. \prb{Maximum Balanced Biclique} does not admit an \revc{{\FPT}} algorithm parameterized by $b$ unless Gap-ETH fails~\cite{MBBunderGapETH}. 
    
    We will first construct an instance $(G',I,J,\mu)$ of \prb{ISR} under $\kTJ$ parameterized by $\mu=|I|-k = \bicliquesize$ from an instance $(G,\bicliquesize)$ of \prb{Maximum Balanced Biclique} \revc{(see also \Cref{fig:MMBtoISR})}.
    Let $G^* = (A\cup B, \{ uv \mid u\in A, v\in B, uv \notin E(G) \})$, and let $\Tilde{G}$ be the graph obtained from $G^*$ by duplicating $B$ a total of $c=|A|-\bicliquesize$ times.
    We denote by $B_1,...,B_c$ the vertex subsets obtained by duplicating $B$ and denote $B_0 = B$.
    Let $B' = B_0\cup B_1\cup \cdots \cup B_c$. 
    We perform the following operations on $\Tilde{G}$: add a new biclique with two independent sets $S$ and $T$, where $|S|=|T|=(c+2)\bicliquesize$; join $S$ and $B'$; and join $T$ and $A$.
    We denote by $G'$ the graph obtained as above.
    Let $I=S$ and $J=T$, then we have $|I|=|J|=(c+2)\bicliquesize$.

    To complete our \revc{{\FPT}}-reduction, we now show that $(G,\bicliquesize)$ is a yes-instance if and only if $(G',I,J,\mu)$ is a yes-instance.
    
    Suppose that $G$ has a biclique with $\bicliquesize$ vertices on each side. Then, we construct a reconfiguration sequence between $I$ and $J$ as follows. First, move $\mu = b$ tokens from $S$ to $A$. Note that this move is successful because $k = |I| - \mu = (c+2)b - b = (c+1)b > \mu$ and there is no edge between any pair of a vertex in $S$ and a vertex in $A$. Second, move $(c+1)b = k$ remaining tokens on $S$ to $B'$ so that, for each $i \in [c] \cup \{0\}$, a set of vertices in $A\cup B_i$ on which the tokens are placed is an independent set. This can be done because a biclique with $b$ vertices on each side of $G$ now corresponds to the independent set of $\Tilde{G}[A \cup B_i]$ for every $i \in [c] \cup \{0\}$. Third, move all $\mu$ tokens on $A$ to $T$. Lastly, move all $(c+1)b$ tokens on $B'$ to $T$. Therefore, $I$ and $J$ are reconfigurable.

    Conversely, suppose that there is a reconfiguration sequence $\sigma = \langle I = I_0, I_1, ..., I_\ell = J \rangle$ between $I$ and $J$ under $\kTJ$ with $k=|I|-\mu$. Let $i > 0$ be the minimum integer such that $I_i \cap S = \emptyset$. Considering $I_{i-1}$, we have $I_{i-1} \subseteq S \cup A$ because $I_{i-1} \cap (B' \cup T) = \emptyset$ due to $I_{i-1} \cap S \neq \emptyset$. Recall that $|I_{i-1}| = |I| = (c+2)\bicliquesize$ and $k = |I| - \bicliquesize = (c+1)\bicliquesize$. Since $I_i \cap S = \emptyset$, this implies that $|I_i \cap A| \ge \bicliquesize$ and hence $I_i \cap T = \emptyset$ from the construction of $G'$. Furthermore, since $|A|=c+\bicliquesize$, we have $|I_i \cap B'| \ge |I_i \setminus A| = (c+2)\bicliquesize - (c+\bicliquesize) = (c+1)(\bicliquesize-1)+1$.
    With the pigeonhole principle, this indicates that at least one vertex set among $B_0,B_1,...,B_c$ contains $\bicliquesize$ vertices in $I_i$. Therefore, $\Tilde{G}[A \cup B_i]$ contains an independent set $I_i$ such that $|I_i \cap A| \ge \bicliquesize$ and $|I_i \cap B_j| \ge \bicliquesize$ for some $j \in [c] \cup \{0\}$, that is, $G$ has a biclique with $\bicliquesize$ vertices on each side.
 \end{proof}

 \begin{figure}
     \centering
     \includegraphics[width=0.7\linewidth]{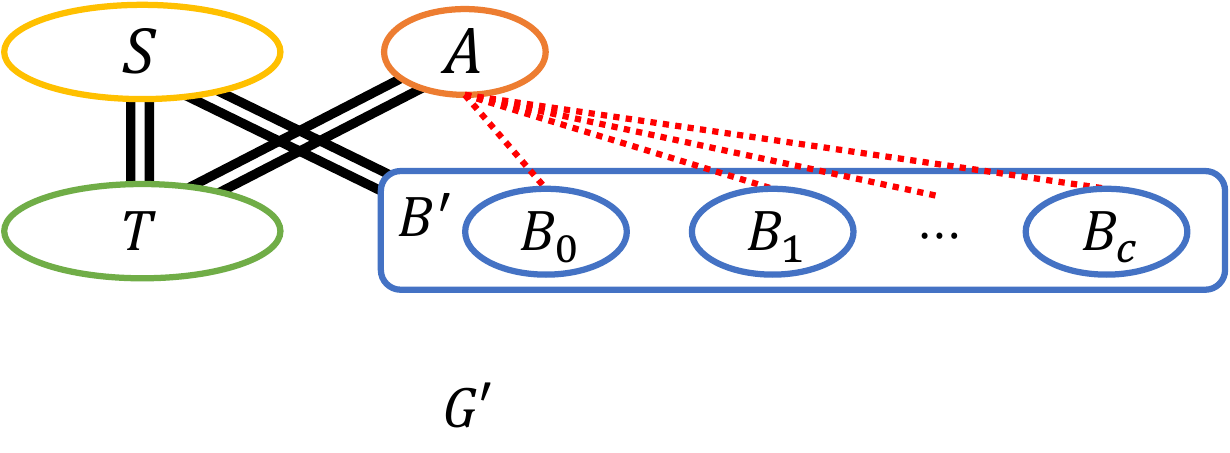}
     \caption{\revc{Illustration of our construction of $G'$. The double lines represent all possible edges between the two vertex sets. The dotted line between $A$ and $B_i$ represents all edges in $G^*$.}}
     \label{fig:MMBtoISR}
 \end{figure}

 We next deal with \prb{ISR} under $\kTS$.
 We show the equivalence between $\TS$ and $\kTS$ for even-hole-free graphs.
 
 \begin{theorem}[$\ast$]\label{TSeqkTS}
     Let $G$ be an even-hole-free graph and let $I$ and $J$ be two independent sets of $G$. Then,  $I$ and $J$ are reconfigurable under $\TS$ if and only if $I$ and $J$ are reconfigurable under $\kTS$. 
 \end{theorem}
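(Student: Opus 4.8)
The plan is to prove the two directions separately, with essentially all the work in one direction. The forward direction, that $\TS$-reconfigurability implies $\kTS$-reconfigurability, is immediate: a single token slide along an edge is a legal $\kTS$ step (with just one of the at most $k$ tokens moving), so any $\TS$ sequence is already a $\kTS$ sequence. The content is the converse, and I would obtain it by showing that a \emph{single} $\kTS$ step can always be \emph{serialized} into a sequence of ordinary $\TS$ steps whose intermediate sets are all independent of the same size, and then applying this serialization step by step to a given $\kTS$ sequence.

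So fix one $\kTS$ step from an independent set $S$ to an independent set $S'$. First I would pin down the structure of such a step. Record each moving token as an arc $u \to v$ (the token slides from $u$ to $v$, $uv \in E$). Every vertex sends and receives at most one token, so these arcs form a union of directed paths and cycles; but using independence of $S$ and $S'$ I claim every component is a single arc. Indeed, a vertex that both sends and receives, or two consecutive arcs $u \to w \to \cdots$, would place two adjacent vertices in $S$ (or in $S'$), contradicting independence. Hence the moves are exactly a matching $M$ in $G$ between $A := S \setminus S'$ and $B := S' \setminus S$, each edge $ab \in M$ meaning ``slide the token from $a \in A$ to $b \in B$'', while the tokens on $S \cap S'$ stay put.

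Next I would recast the serialization as a scheduling problem. When we slide the token of a pair $(a,b)$ by itself, placing it on $b$ is safe for independence exactly when every vertex of $A$ adjacent to $b$ has already been vacated (the adjacencies of $b$ inside $S'$ and the occupancy of $b$ are automatically fine, since $S'$ is independent and $M$ is a matching). This yields a \emph{dependency digraph} $D$ on the pairs of $M$: put an arc from the pair of $a' \in A$ to the pair of $b \in B$ whenever $a'$ is adjacent to $b$ and they lie in different pairs. A valid order in which to perform the single slides is precisely a topological order of $D$, which exists if and only if $D$ is acyclic; I would then check directly that processing the pairs in such an order keeps every intermediate set independent and realizes the step by genuine single $\TS$ slides.

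The crux, and the only place even-hole-freeness enters, is showing $D$ is acyclic. Suppose not, and take a shortest directed cycle $p_1 \to p_2 \to \cdots \to p_r \to p_1$ in $D$; here $r \ge 2$ since $M$ yields no self-dependency. Writing $a_i, b_i$ for the endpoints of pair $p_i$, the arcs give cross-edges $a_i b_{i+1} \in E$, which together with the matching edges $a_i b_i \in E$ trace the closed walk $b_1, a_1, b_2, a_2, \ldots, b_r, a_r, b_1$ on $2r$ distinct vertices. Because $A \subseteq S$ and $B \subseteq S'$ are independent, the subgraph induced on these vertices is bipartite, so any non-cycle edge is an $A$--$B$ edge $a_s b_t$; but such an edge is itself a dependency arc $p_s \to p_t$ and, combined with the appropriate arc of the cycle, produces a strictly shorter directed cycle, contradicting minimality. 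Hence the $2r$ vertices induce a chordless even cycle of length $2r \ge 4$, i.e.\ an even hole, contradicting even-hole-freeness of $G$. Therefore $D$ is acyclic, the step serializes, and chaining the serializations across a $\kTS$ sequence produces the desired $\TS$ sequence. I expect the structural reduction of a $\kTS$ step to a matching and the minimal-cycle-to-even-hole argument to be the substantive points; the verification that a topological order yields legal independence-preserving slides should be routine once the setup is in place.
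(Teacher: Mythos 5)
Your proposal is correct. Both your argument and the paper's reduce the theorem to the same core task: turning a single $\kTS$ step between independent sets $S$ and $S'$ into a sequence of legal single slides, after observing (as you do) that the moved tokens form a matching between $S\setminus S'$ and $S'\setminus S$. The executions differ, though. The paper proves this as \Cref{claimTS=kTS} by induction on $|I\setminus J|$: it cites a structural lemma of Kami\'nski et al.~\cite{KMM12} asserting that in an even-hole-free graph some vertex $v\in J\setminus I$ has at most one neighbour in $I\setminus J$, notes that $\kTS$-adjacency forces at least one such neighbour (the matched partner $u$), slides that single token $u\to v$, and recurses. You instead build the dependency digraph $D$ on the matched pairs, prove $D$ is acyclic by taking a shortest directed cycle and exhibiting an induced even cycle of length $2r\ge 4$ (the $r=2$ case giving a $C_4$, larger $r$ handled by the chord-shortcut argument), and then schedule all slides by a topological order. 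These are two views of the same combinatorics---a source of your digraph $D$ is precisely a vertex of $S'\setminus S$ whose unique neighbour in $S\setminus S'$ is its matched partner, i.e.\ the vertex the cited lemma produces---so your acyclicity proof is in effect a self-contained proof of the fact the paper outsources to \cite{KMM12}. What each buys: the paper's version is shorter and leans on known results; yours is fully self-contained, makes the role of even-hole-freeness explicit, and produces the whole serialization at once rather than greedily. Your verification steps (that only unvacated $A$-neighbours of $b$ can obstruct a slide, and that all chords of the closed walk are $A$--$B$ edges because $S$ and $S'$ are independent) are sound, so there is no gap.
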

 
 \prb{ISR} under $\TS$ is known to \rev{be} \PSPACE-complete even for split graphs~\cite{BKLMOS21} and solvable in polynomial time for interval graphs~\cite{BB17}, \revc{where these graphs are even-hole-free.}
 Therefore, from \Cref{TSeqkTS}, we obtained the following corollary.
 
 \begin{corollary} \label{cor:ISR_kTS}
    For any integer $k$ with $1\leq k \leq |I|$, \prb{ISR} under $\kTS$ is \PSPACE-complete even for split graphs and solvable in polynomial time for interval graphs, where $I$ is an initial independent set of a given instance.
 \end{corollary}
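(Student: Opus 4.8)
The plan is to derive both halves of the statement directly from \Cref{TSeqkTS}, after checking that the two graph classes involved are even-hole-free. The point I would stress is that \Cref{TSeqkTS} is an \emph{instance-wise} equivalence: for an even-hole-free graph $G$ and any two independent sets $I,J$, reconfigurability under $\TS$ holds if and only if reconfigurability under $\kTS$ holds. Hence, restricted to any even-hole-free graph class, \prb{ISR} under $\kTS$ and \prb{ISR} under $\TS$ have exactly the same yes-instances and therefore the same computational complexity; this is what lets the known $\TS$ results carry over unchanged.

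First I would verify even-hole-freeness of both classes. For interval graphs this is immediate, since interval graphs are chordal and thus contain no induced cycle of length at least four, in particular no even hole. For split graphs I would use the forbidden-subgraph characterization that split graphs are exactly the $\{2K_2,C_4,C_5\}$-free graphs: $C_4$ is itself an even hole, and every hole of length at least six contains an induced $2K_2$ (pick two disjoint, mutually non-adjacent edges of the cycle), so split graphs contain no hole at all and are in particular even-hole-free.

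Next I would transfer the two known $\TS$ results. For hardness, the $\PSPACE$-completeness of \prb{ISR} under $\TS$ for split graphs~\cite{BKLMOS21} is witnessed by split-graph instances $(G,I,J)$; by \Cref{TSeqkTS} each such instance is a yes-instance under $\kTS$ exactly when it is one under $\TS$, so the same reduction proves $\PSPACE$-hardness under $\kTS$. Membership in $\PSPACE$ holds for \prb{ISR} under $\kTS$ on all graphs by the usual reachability argument (independent sets have polynomial size and $\kTS$-adjacency is polynomial-time checkable), giving $\PSPACE$-completeness. For tractability, on an interval graph I would simply run the polynomial-time $\TS$-algorithm of~\cite{BB17} and return its answer, which is correct for $\kTS$ again by \Cref{TSeqkTS}. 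Since \Cref{TSeqkTS} holds for every $k$ (with $k=1$ recovering $\TS$), the whole range $1\le k\le|I|$ is covered.

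I do not expect a real obstacle. The only genuinely mathematical step is the $2K_2$-argument ruling out long holes in split graphs, and the only subtle point is to invoke \Cref{TSeqkTS} as an instance-wise equivalence, so that both the hardness reduction and the polynomial-time algorithm transfer verbatim rather than for a single fixed pair $(I,J)$.
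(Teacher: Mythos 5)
Your proposal is correct and follows essentially the same route as the paper: the paper also obtains the corollary by combining \Cref{TSeqkTS} with the known $\TS$ results for split graphs~\cite{BKLMOS21} and interval graphs~\cite{BB17}, using that both classes are even-hole-free. Your extra details (the $2K_2$ argument ruling out long holes in split graphs, and the explicit remark that the equivalence is instance-wise) are correct elaborations of what the paper leaves implicit.
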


\section{Conclusion} \label{sec:conclusion}
In this paper, to extensively investigate how the extension of reconfiguration rules affects the complexity of reconfiguration problems, we researched a comprehensive reconfiguration problem, namely the \prb{Induced Subgraph Isomorphism Reconfiguration} problem. We presented two meta-theorems that include both negative and positive results for this problem. Our insight would be useful for future research on reconfiguration problems under extended rules.

 \bibliographystyle{plainurl}

%

\newpage
\appendix

\section{Omitted proofs in \Cref{sec:ISIR}}\label{appendixISIR}

\subsection{Proof of \Cref{sub-reachPSPACE}}
\begin{proof}
    Consider an instance $(w_s,w_t)$ of the \prb{$\word$-Word Reachability} problem where $W=(\Sigma,A)$. 
    We will construct an instance $(G,H,S_s,S_t,\Rule)$ of {\ISIsoR} under $\Rule\in \{\kTS,\kTJ\}$~(see also \Cref{fig:enter-labelPSPACE}). Let $\Sigma=\{\sigma_1,\sigma_2,...,\sigma_{|\Sigma|}\}$ and $n=|w_s|=|w_t|$. Let $F\in \mathscr{H}$ be a graph with the smallest size in $\mathscr{H}$. Note that $F$ is independent of the instance of {\ISIsoR} and hence fixed. We set $k$ to an arbitrary positive integer at least $k_{\mathscr{H}} = 2|V(F)|$. Let $m$ be a positive integer such that $2^m|V(F)|\leq k < 2^{m+1}|V(F)|$ and $t=2^m$.
    We construct a graph $G'$ as follows. The vertex set $V(G')$ is the union of $n$ vertex sets $L_1, L_2,..., L_n$ such that each vertex set is with size $|\Sigma|$ and is a clique of $G'$. We call each of the vertex sets $L_1, L_2,..., L_n$ a \emph{layer}. 
    \revp{Each vertex in a layer corresponds to a symbol in $\Sigma$. We denote by $v_i$ the vertex corresponding to the symbol $\sigma_i$ for some $i\in [|\Sigma|]$.}
    For a positive integer $i\in [n-1]$ and two positive integers $j,j'\in[|\Sigma|]$, two vertices $v_j\in L_{i}$ and $v_{j'}\in L_{i+1}$ are joined by an edge if and only if $(\sigma_j,\sigma_{j'})\notin A$. 
    Then, $G$ is a graph obtained from $G'$ by replacing $v_j\in L_i$ with a graph $tF_i^j$, where $F_i^j$ is isomorphic to $F$ for each pair of $i\in [n]$ and $j\in[|\Sigma|]$.
    This completes the construction of $G$.
    Let $H=ntF$. Note that $H \in \mathscr{H}$ because $F \in \mathscr{H}$ holds and $\mathscr{H}$ is additive. For a $\word$-word $w$ and the graph $G$, we associate $w$ with an $H$-induced subgraph isomorphic set $S_w$ in $G$ as follows.
    Denote by $w[i]$ the $i$-th symbol of $w$.
    For each $i\in[n]$, we add tokens on a vertex set $V(tF_i^j)$ if $w[i]=\sigma_j$ for $j\in[|\Sigma|]$. Lastly, let $S_s=S_{w_s}$ and $S_t=S_{w_t}$.

    To complete our polynomial reduction, we prove that $(w_s,w_t)$ is a yes-instance if and only if $(G,H,S_s,S_t,\Rule)$ is a yes-instance.
    First, we will show the only-if direction. Suppose that there exists a reconfiguration sequence $\lambda=\langle w_s=w_0,w_1,...,w_\ell=w_t \rangle$. Let $S_w$ and $S_{w'}$ be two {\hISIS}s of $G$ that are associated with consecutive $\word$-words $w$ and $w'$ in $\lambda$, respectively. Assume that $w$ and $w'$ differ at an $i$-th symbol and let $w[i]=\sigma_j$ and $w'[i]=\sigma_{j'}$. Then, $S_w\setminus S_{w'}=V(tF_i^j)$ and $S_{w'}\setminus S_w=V(tF_i^{j'})$. Since $|V(tF_i^j)|=|V(tF_i^{j'})|=t|V(F)|\leq k$ and any two vertices $v\in V(tF_i^j)$ and $v'\in V(tF_i^{j'})$ are joined by an edge, $S_w$ and $S_{w'}$ are adjacent under $\kTJ$ and $\kTS$. Therefore, there is a reconfiguration sequence $\lambda'=\langle S_s=S_{w_0},S_{w_1},...,S_{w_\ell}=S_t \rangle$.

    Conversely, suppose that there is a reconfiguration sequence $\lambda'=\langle S_s=S_0,S_1,...,S_\ell=S_t \rangle$. We first claim that each vertex set in $\lambda'$ has at most $t|V(F)|$ tokens on vertices in $L_i$ for every $i\in [n]$. For the sake of contradiction, assume that there are more than $t|V(F)|$ tokens on a vertex set $X\subseteq L_i$ for some $i\in [n]$. Then $G[X]$ is connected by our construction of $G$ and we have $|X| > t|V(F)|$. However, any component of $H$ has at most $|V(F)|$ vertices, a contradiction. 
    Similarly, we can claim that all tokens on vertices in $L_i$ for each $i\in [n]$ are placed on vertices of $V(tF_i^j)$ for some $j\in[|\Sigma|]$.
    Thus, for two consecutive {\hISIS}s $S$ and $S'$ in $\lambda'$, we have $S=V(tF_1^{j_1})\cup V(tF_2^{j_2})\cup \cdots \cup V(tF_n^{j_n})$ and $S'=V(tF_1^{j'_1})\cup V(tF_2^{j'_2})\cup \cdots \cup V(tF_n^{j'_n})$, where $j_i, j'_i \in [|\Sigma|]$ for each $i \in [n]$. Then, we convert $S$ and $S'$ to \revp{the corresponding} two strings $w=\sigma_{j_1}\sigma_{j_2}...\sigma_{j_n}$ and $w'=\sigma_{j'_1}\sigma_{j'_2}...\sigma_{j'_n}$, respectively.
    Now we can claim that $w$ is a $\word$-word. Assume that there exists two consecutive symbols $w[i]=\sigma_j$ and $w[i+1]=\sigma_{j'}$ such that $(\sigma_j, \sigma_{j'})\notin A$ for some $i\in[n-1]$. Then, an induced subgraph $G[V(tF_i^j)\cup V(tF_{i+1}^{j'})]$ is connected from our construction. However, since $S$ is an {\hISIS}, any component of $G[S]$ has at most $|V(F)|$ vertices, a contradiction. In the same way, we can claim that $w'$ is also a $\word$-word.
    Furthermore, there exists an integer $a \in [n]$ such that $j_a \neq j'_a$ and $j_b = j'_b$ for every $b \in [n] \setminus \{a\}$; otherwise, $S$ and $S'$ differ by at least $2t|V(F)| > k$ vertices, which contradicts that $S$ and $S'$ are adjacent. It follows that $w$ and $w'$ differ by only one symbol.
    Therefore, $w_s$ and $w_t$ are reconfigurable. 
\end{proof}

\subsection{Proof of \Cref{sub-reachPSPACEperfect}}

\rev{By using the following \Cref{replacingperfect}, which was proved in~\cite{Lovasz06perfecttheorem}, we show \Cref{Gisperfect} to prove \Cref{sub-reachPSPACEperfect}.}

\begin{lemma}[\cite{Lovasz06perfecttheorem}]\label{replacingperfect}
        Let $G_a$ and $F_a$ be two perfect graphs. Then, a graph $G'_a$ obtained from $G_a$ by replacing a vertex $v\in V(G)$ with $F_a$ is also a perfect graph.
\end{lemma}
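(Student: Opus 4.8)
The statement is the classical substitution (composition) lemma for perfect graphs, so the plan is to prove it through the coloring characterization of perfection: a graph is perfect if and only if every induced subgraph $W$ satisfies $\chi(W)=\omega(W)$. The overall strategy is to reduce substitution by an arbitrary perfect graph to substitution by a clique, where perfection is easy to control, and then to transport an optimal coloring back to the general case.

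First I would reduce the problem to the top-level graph. Write $G'_a=G_a[v\to F_a]$ and let $M=V(F_a)$ be the ``module'' that replaces $v$. The key observation is that substitution is closed under taking induced subgraphs: any induced subgraph $W$ of $G'_a$ is itself of the form $G_a''[v\to F_a'']$, where $F_a''=G'_a[W\cap M]$ and $G_a''=G_a[(W\setminus M)\cup\{v\}]$ are induced subgraphs of $F_a$ and $G_a$ respectively (and if $W\cap M=\emptyset$ then $W$ is simply an induced subgraph of $G_a-v$). Since perfection is hereditary, both $G_a''$ and $F_a''$ are perfect. Hence it suffices to give one uniform argument establishing $\chi(G'_a)=\omega(G'_a)$; applying that same argument to every such $W$ then yields $\chi(W)=\omega(W)$ for all induced subgraphs, i.e.\ that $G'_a$ is perfect.

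The heart of the proof is a coloring transfer. Let $\omega_2=\omega(F_a)$ and form $G^{*}=G_a[v\to K_{\omega_2}]$, obtained by substituting $v$ by a clique of size $\omega_2$. By the \emph{replication lemma} (substituting a vertex of a perfect graph by a clique, equivalently adding true twins one at a time, preserves perfection), $G^{*}$ is perfect; and since the module contributes the same clique number $\omega_2$ in both graphs and is joined to $N_{G_a}(v)$ identically, we have $\omega(G^{*})=\omega(G'_a)=:\omega$. I would then take an optimal $\omega$-coloring of $G^{*}$, in which the $\omega_2$ clique-copies of $v$ necessarily receive $\omega_2$ distinct colors $\gamma_1,\dots,\gamma_{\omega_2}$. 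Using an optimal $\omega_2$-coloring $M_1,\dots,M_{\omega_2}$ of the perfect graph $F_a$, I color $G'_a$ by retaining the colors of $G_a-v$ as in $G^{*}$ and assigning color $\gamma_i$ to every vertex of $M_i$. Properness inside $M$ is immediate, and properness between $M$ and $G_a-v$ holds because a module vertex colored $\gamma_i$ is adjacent only to $N_{G_a}(v)$, which avoids $\gamma_i$ since the $\gamma_i$-colored copy of $v$ in $G^{*}$ had exactly neighborhood $N_{G_a}(v)$. Thus $\chi(G'_a)\le\omega=\omega(G'_a)$, and equality follows from $\chi\ge\omega$.

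The main obstacle is the replication lemma itself, the one nontrivial external ingredient: that adding a true twin $v'$ of $v$ (a vertex with $N[v']=N[v]$) to a perfect graph preserves perfection. I would either cite it directly, as it is due to Lovász~\cite{Lovasz06perfecttheorem}, the same source as the present statement, or prove it by the $\chi=\omega$ criterion. The only delicate case is when $v$ lies in no maximum clique, so that $\omega$ is unchanged: since every maximum clique then meets $v$'s color class in a vertex distinct from $v$, deleting those vertices lowers the clique number, one colors the resulting (perfect) induced subgraph with $\omega-1$ colors, and assigns the last color to the deleted independent set together with $v'$. Everything else in the plan reduces to routine bookkeeping about how cliques, independent sets, and colorings decompose across the module.
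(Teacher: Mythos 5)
Your proof is correct, but note that the paper does not actually prove this lemma at all: it is stated with the citation \cite{Lovasz06perfecttheorem} and used as a black box, so there is no in-paper argument to compare against. What you have written is a sound, self-contained rendition of the classical substitution argument: the reduction of an arbitrary induced subgraph $W$ of $G'_a$ to the form $G_a''[v\to F_a'']$ (using heredity of perfection) correctly reduces the task to a single $\chi=\omega$ verification; the passage through $G^{*}=G_a[v\to K_{\omega(F_a)}]$ correctly preserves the clique number, since any clique meeting the module can be completed inside it to a maximum clique of $F_a$ in $G'_a$ and to the full $K_{\omega(F_a)}$ in $G^{*}$; and the recoloring of the module by matching the $\omega(F_a)$ colors of the clique copies of $v$ with an optimal coloring of $F_a$ is proper for exactly the reason you give (every module vertex sees only $N_{G_a}(v)$ outside the module, which avoids each $\gamma_i$). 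Your treatment of the replication lemma is also the standard Lov\'asz argument, and the delicate case (when $v$ lies in no maximum clique, so that $(A\setminus\{v\})\cup\{v'\}$ is an independent set usable as the last color class) is handled correctly. In short, you supply a complete proof where the paper chose to cite one; the only thing you gain by citing \cite{Lovasz06perfecttheorem} directly, as the paper does, is brevity.
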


\begin{lemma}\label{Gisperfect}
    Let $F$ and $G$ be the graphs defined in the proof of \Cref{sub-reachPSPACE}.
    If $F$ is a perfect graph, then $G$ is also a perfect graph.
\end{lemma}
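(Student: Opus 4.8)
The plan is to show that $G$ can be obtained from a perfect graph by a sequence of vertex-replacement operations, each of which preserves perfection by \Cref{replacingperfect}. Recall from the proof of \Cref{sub-reachPSPACE} that $G$ is constructed from $G'$ by replacing each vertex $v_j \in L_i$ with a copy $tF_i^j$ of $tF$, where $tF$ is the disjoint union of $t$ copies of $F$. So the argument naturally splits into two parts: first establish that the starting graph $G'$ is perfect, and then verify that the replacement graph $tF$ we substitute in is perfect whenever $F$ is. Once both facts are in hand, a single induction on the number of replacements, invoking \Cref{replacingperfect} at each step, finishes the proof.

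First I would argue that $G'$ is a perfect graph. The graph $G'$ is the union of $n$ layers $L_1,\dots,L_n$, each a clique, with edges only between consecutive layers $L_i$ and $L_{i+1}$. This is a ``layered'' structure: it resembles a blown-up path of cliques. The cleanest route is probably to verify directly that $G'$ contains no odd hole and neither does $\overline{G'}$, using the fact that any induced cycle of length at least four must use vertices from several layers, and the consecutive-layer edge structure restricts how such cycles can close up. Alternatively, I would try to recognize $G'$ as belonging to a known perfect class (for instance, arguing it is an interval graph, or a comparability/co-comparability graph coming from the linear order of layers), which would give perfection immediately via the Strong Perfect Graph Theorem or a classical characterization. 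Either way, this is the step I expect to require the most care, since it is the only place where the specific edge pattern of the construction enters and where one must genuinely check the hole conditions rather than quote a black-box lemma.

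Next I would show that $tF$ is perfect whenever $F$ is. Since perfection is preserved under disjoint union --- a hole or its complement lives within a single component for a hole, and complementing a disjoint union joins the parts but cannot create an odd antihole across components that were not already forced --- it suffices to note that $tF = F + F + \cdots + F$ and that the class of perfect graphs is closed under taking disjoint unions. This is a standard closure property, so I would state it briefly, perhaps justifying the complement case by observing that an odd hole in $\overline{tF}$ would project to an odd hole in the complement of the join of the copies, which again decomposes.

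Finally I would combine these ingredients. Order the replacements $v_j \in L_i \mapsto tF_i^j$ arbitrarily as $R_1, R_2, \dots, R_N$ with $N = n|\Sigma|$, and let $G_0 = G'$, $G_r$ be the graph after performing $R_1,\dots,R_r$, so that $G_N = G$. By the base case $G_0 = G'$ is perfect, and each $tF_i^j$ is perfect since $F$ is; then \Cref{replacingperfect}, applied with $G_a = G_{r-1}$ and $F_a = tF$, shows inductively that each $G_r$ is perfect, whence $G = G_N$ is perfect. I would remark that \Cref{replacingperfect} as stated replaces a single vertex with a perfect graph, which is exactly the shape of each operation $R_r$, so no reformulation is needed. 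The main obstacle remains establishing perfection of the base graph $G'$; the rest is routine closure and induction.
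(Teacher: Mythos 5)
Your overall architecture matches the paper's: show the quotient graph $G'$ is perfect, note that a disjoint union of copies of a perfect graph $F$ is perfect, and then apply \Cref{replacingperfect} once per vertex of $G'$ by induction. The disjoint-union closure and the induction are fine and routine, as you say. The problem is that the one step you yourself identify as the crux --- perfection of $G'$ --- is never actually proved. You list candidate strategies (``verify directly that $G'$ and $\overline{G'}$ have no odd hole,'' or ``recognize $G'$ as an interval graph, or a comparability/co-comparability graph'') but carry none of them out, so the proposal has a genuine gap exactly where the content of the lemma lives. Worse, one of the suggested routes fails: $G'$ is in general not chordal, hence not an interval graph. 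Take $v_j,v_{j'}\in L_i$ and $v_l,v_{l'}\in L_{i+1}$ with $v_jv_l$ and $v_{j'}v_{l'}$ present but $v_jv_{l'}$ and $v_{j'}v_l$ absent (which the relation $A$ permits); together with the two intra-layer clique edges this is an induced $C_4$.

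For comparison, the paper closes this step with a clique-cutset decomposition: each interior layer $L_2,\dots,L_{n-1}$ is a clique cutset of $G'$, so every induced subgraph of $G'$ without a clique cutset lies inside two consecutive layers; such a subgraph is co-bipartite (the complement of a bipartite graph) and hence perfect, and the clique-cutset composition theorem then yields perfection of $G'$. Your co-comparability suggestion could also be made to work --- orient every edge of $\overline{G'}$ from the lower-indexed layer to the higher-indexed one; transitivity holds because any two vertices in layers at distance at least two are adjacent in $\overline{G'}$ --- but as written it is only a pointer to a possible argument, not an argument. Until one of these routes is actually executed, the proof is incomplete.
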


\begin{proof}
    We begin by observing that the graph $G'$ in the proof of \Cref{sub-reachPSPACE} is a perfect graph by following the result of Kaminski et al.~\cite{KMM12}. 
    \revp{For a graph $G = (V, E)$ and a vertex set $V' \subseteq V$, we denote by $G - V'$ the subgraph of $G$ induced by $V \setminus V'$.}
    A \emph{clique cutset} of a graph $G'$ is a clique $K$ in $G'$ such that $G' - K$ is disconnected. For a component $X$ of $G' - K$, if $G'[V(X)\cup K]$ and $G' - V(X)$ are both perfect graphs, then $G'$ is a perfect graph~\cite{KMM12}. Thus, in order to show that $G'$ is a perfect \rev{graph}, it is sufficient to argue that any induced subgraph $G'_s$ of $G'$ without clique cutsets is a perfect graph.
    From our construction of $G'$, $L_2,...,L_{n-1}$ are clique cutsets of $G'$. This implies that any induced subgraph $G'_s$ of $G'$ without clique cutsets is induced by a subset of two consecutive layers $L_i$ and $L_{i+1}$ for $i\in[n-1]$. Thus, since each layer is a clique, $G'_s$ is a co-bipartite graph, that is, the complement of a bipartite graph. It is known that every co-bipartite graph is a perfect graph. We conclude that $G'$ is also a perfect graph.
    
    We next show that the graph $G$ is a perfect graph. 
    Recall that $G$ is obtained from $G'$ by replacing each vertex in $V(G')$ with the disjoint union of some copies of $F$. If $F$ is a perfect graph, the disjoint union of some copies of $F$ is also a perfect graph. Therefore, the proof is completed according to \Cref{replacingperfect}.
\end{proof}

Consider the subclass $\mathscr{H}'$ of $\mathscr{H}$ that consists of all perfect graphs in $\mathscr{H}$. Note that, unless $\mathscr{H}'$ is empty, $\mathscr{H}'$ is additive as the class of perfect graphs is also additive. From \Cref{sub-reachPSPACE}, {\ISIsoR} remains \PSPACE-complete under $\Rule\in \{\kTS,\kTJ\}$ for $\mathscr{G}$ and $\mathscr{H}'$. Combined with \Cref{Gisperfect}, this completes the proof of \Cref{sub-reachPSPACEperfect}.

\subsection{Proof of \Cref{sub-reach-inNP}}

\begin{proof}
    We first show that {\ISIsoR} under $\kTJ$ is in {\NP} when $k=|V(H)|-\mu$ \revp{for any fixed positive integer $\mu$}.
    Let $(G,H,S_s,S_t,\kTJ)$ be an instance of {\ISIsoR}. 
    \revp{Consider a shortest reconfiguration sequence $\sigma = \langle\rho_s=\rho_0,\rho_1,...,\rho_\ell=\rho_t\rangle$ of injections from $V(H)$ to $V(G)$ as the certificate that proves that $(G, H, S_s, S_t, \kTJ)$ is a yes-instance.}
    Note that, if the certificate is valid, the sequence $\sigma$ yields a reconfiguration sequence between $S_s$ and $S_t$. 
    We can check whether each injection in $\sigma$ is an $H$-induced subgraph isomorphism or not in time $O(|V(H)|^2)$. 
    \revp{In addition, we can check whether two images by two consecutive injections in $\sigma$ satisfy $\kTJ$ or not in time $O(|V(H)|)$.}
    Thus, the remaining is to prove that the length of \revp{a shortest} reconfiguration sequence $\sigma$ between two feasible solutions is polynomial. 
    To this end, we show that the diameter of each component in the reconfiguration graph $\mathcal{C}$ is $O(n^\mu)$.
    Since $k=|V(H)|-\mu$, any two feasible solutions with $\mu$ common vertices are adjacent under $\kTJ$. 
    Therefore, nodes in $\mathcal{C}$ corresponding to feasible solutions with $\mu$ common vertices form a clique of $\mathcal{C}$. 
    The reconfiguration graph $\mathcal{C}$ has at most $\tbinom{n}{\mu}=O(n^\mu)$ such cliques $\mathcal{K}$.
    Since each node of $\mathcal{C}$ is contained in at least one clique of $\mathcal{K}$, the vertex set of $\mathcal{C}$ equals the union of all cliques in $\mathcal{K}$.
    A shortest path between any two nodes in a component of $\mathcal{C}$ has at most two nodes of each clique in $\mathcal{K}$.
    Thus, the diameter of each component of $\mathcal{C}$ is $O(n^\mu)$.

    We next show that {\ISIsoR} for $\mathscr{G}$ and $\mathscr{H}$ is {\NP}-hard if {\ISIso} for $\mathscr{G}$ and $\mathscr{H}$ is \NP-hard by a polynomial-time reduction from {\ISIso} to {\ISIsoR}. 
    Let $(G',H')$ be an instance of {\ISIso} with $2|V(H')| \geq \mu$.
    Even with this constraint on the instance, {\ISIso} remains \NP-hard because if $2|V(H')| < \mu$, then {\ISIso} is solvable in polynomial time by enumerating all sets of constant size $|V(H')| < \mu/2$.
    We will construct an instance $(G,H,S_s,S_t,\kTJ)$ of {\ISIsoR} under $\kTJ$ from $(G',H')$.

    We construct $G$ from $G'$ as follows (see also \Cref{fig:enter-labelNP}).
    Let $A$, $B$, $X$, and $Y$ be four graphs such that each of them is isomorphic to $2H'$, and let $H^*$ be a graph isomorphic to $H'$.
    Let $G_0$ be a graph such that $V(G_0)=\{g',a,b,h^*,x,y\}$ and $E(G_0)=\{g'a,g'b,ab,ah^*,bh^*,ax,by,xy\}$. Let $G$ be a graph obtained from $G_0$ by replacing $g',a,b,h^*,x$ and $y$ with $G',A,B,H^*,X$, and $Y$, respectively.  
    This completes the construction of $G$.
    Finally, set $H=4H'$, $S_s=V(A) \cup V(Y)$, and $S_t=V(B) \cup V(X)$.
    \revp{Note that we have $k = |V(H)| - \mu \ge 4|V(H')| - 2|V(H')| = 2|V(H')|$ from the assumption that $2|V(H')| \ge \mu$.}
    
    To complete our polynomial-time reduction, we prove that $(G',H')$ is a yes-instance of {\ISIso} if and only if $(G,H,S_s,S_t,\kTJ)$ is a yes-instance of {\ISIsoR}.

    Suppose that there exists an $H'$-induced subgraph isomorphic set in $G'$. 
    Then there is a reconfiguration sequence between $S_s$ and $S_t$ under $\kTJ$ as follows: First move half of tokens in $A$ on vertices of $G'$ that induce a subgraph isomorphic to $H'$ and the other half in $A$ to $H^*$, which is isomorphic to $H'$; then move all tokens in $Y$ to $X$; lastly move all tokens in $G'$ and $H^*$ to $B$.
    In every reconfiguration step, we move at most $ |V(H)|/2 = 2|V(H')| \le k$ tokens. 
    Therefore, $(G,H,S_s,S_t,\kTJ)$ is a yes-instance.

    Conversely, suppose that there exists a \revp{shortest} reconfiguration sequence $\sigma=\langle S_s=S_0, S_1,...,S_\ell = S_t \rangle$ between $S_s$ and $S_t$ under $\kTJ$. Let $c$ be the number of components of $H'$, then we have $|\mathscr{C}_{H}|=4c$. 
    Furthermore, we have $|S_{\ell-1}\cap (V(B)\cup V(X))| \ge \mu$ because at most $k = |V(H)|-\mu$ tokens can be moved at one reconfiguration step.
    Assume for a contradiction that $S_{\ell-1} \cap V(B) \neq \emptyset$.
    Then, since any vertex in $V(G') \cup V(H^*) \cup V(A) \cup V(Y)$ is adjacent to any vertex in $V(B)$ and $S_{\ell-1}\neq V(B)\cup V(X)$, the graph induced by $S_{\ell-1} \setminus V(X)$ is connected. 
    In other words, the subgraph of $G$ induced by $S_{\ell-1} \cap V(X)$ has exactly $4c-1$ components of $H$, and hence the subgraph contains a subgraph isomorphic to $3H'$.
    However, $X$ is isomorphic to $2H'$, a contradiction.
    Therefore, we have $S_{\ell-1}\cap V(B)=\emptyset$. By the same argument, we can say that $S_{\ell-1}\cap V(A)=\emptyset$. 
    
    We repeatedly perform the following steps (1)--(3).
    \begin{enumerate}[(1)]
        \item Suppose that $S_{\ell-1} \cap V(Y) \neq \emptyset$. Since $S_{\ell-1} \cap V(X) \neq \emptyset$ \revp{due to $|S_{\ell-1}\cap (V(B)\cup V(X))| \ge \mu$ and $S_{\ell-1}\cap V(B)=\emptyset$}, the vertex subset $S_{\ell-1} \cap (V(X) \cup V(Y))$ induces a connected graph $F_{XY}$ of $G$ isomorphic to some component $F_X$ of $X$. We move tokens on the vertices of $F_{XY}$ to the vertices of $F_X$. 
        \revp{This reconfiguration is allowed, since $|V(F_{XY})| \le |V(H')|$ and $k \ge 2|V(H')|$.} 
        Let $i=0$ and $S'_i$ be the vertex subset obtained by this token move. 
        \item Consider a largest component $F$ of $X + H^*$ such that $V(F) \setminus S'_{i} \neq \emptyset$. \revp{Let $f \ge 1$ be the number of components of $H'$ isomorphic to $F$. 
        Then, $H$ has $4f$ components isomorphic to $F$ and $X + H^*$ has exactly $3f$ components isomorphic to $F$.}
        Thus, there exists a component $F_{G'}$ of the subgraph induced by $S'_{i} \cap V(G')$ isomorphic to $F$. We swap tokens on the vertices of $F$ and tokens on the vertices of $F_{G'}$, and then let $S'_{i+1}$ the obtained vertex set. We then increment $i$. 
        \item Repeat the second step until $V(X) \cup V(H^*) \subseteq S'_i$.
    \end{enumerate}
    In the first step, since $G[F_{XY}]$ is connected, $S'_0$ is also an $H$-induced subgraph isomorphic set.
    In the second step, observe that if $S'_i$ is an $H$-induced subgraph isomorphic set, then $S'_{i+1}$ is also an $H$-induced subgraph isomorphic set.
    Furthermore, the above steps eventually terminate because the size of $ S'_i \cap (V(X) \cup V(H^*))$ is strictly increasing.
    Let $S'$ be an obtained vertex set after the above steps terminate.
    As $S'$ is isomorphic to $H = 4H'$ and $S' \cap (V(X) \cup V(H^*)) $ is isomorphic to $3H'$, we conclude that $S' \cap V(G')$ is isomorphic to $H'$. Therefore, $(G',H')$ is a yes-instance.
\end{proof}

\subsection{Proof of \Cref{sub-reachNPperfect}}
\rev{To prove \Cref{sub-reachNPperfect}, we show the following lemma.}
\begin{lemma}\label{NPGisperfect}
    Let $H'$, $G'$, and $G$ be the graphs defined in the proof of \Cref{sub-reach-inNP}. If $H'$ and $G'$ are perfect graphs, then $G$ is also a perfect graph.
\end{lemma}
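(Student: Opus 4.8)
The plan is to build $G$ from the six-vertex ``skeleton'' graph $G_0$ by a sequence of single-vertex substitutions and to invoke \Cref{replacingperfect} at each step. First I would verify directly that $G_0$ itself is perfect. By the definition of a perfect graph it suffices to check that neither $G_0$ nor $\overline{G_0}$ contains an odd hole, and since $G_0$ has only six vertices the only odd hole that could occur is an induced $C_5$. For $G_0$ this is ruled out by a short count: among the $\binom{6}{5}=6$ induced subgraphs on five vertices, deleting $a$ or $b$ leaves four edges and deleting any of $g',h^*,x,y$ leaves six edges, so none ever has exactly the five edges that an induced $C_5$ requires. For $\overline{G_0}$ the check is even quicker: in $\overline{G_0}$ the vertices $a$ and $b$ have degree one, so any induced cycle would have to lie inside $\{g',h^*,x,y\}$, which has only four vertices and hence admits no $C_5$. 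Thus $G_0$ is perfect.

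Second, I would observe that every graph used to replace a vertex of $G_0$ is perfect. By hypothesis $G'$ and $H'$ are perfect, and $H^*\cong H'$, so $H^*$ is perfect. Each of $A,B,X,Y$ is isomorphic to $2H'=H'+H'$, and the class of perfect graphs is closed under disjoint union: an induced odd cycle of $G_1+G_2$ lies entirely in one component, while its complement is the join $\overline{G_1}\ast\overline{G_2}$, in which no induced $C_{2k+1}$ with $k\ge 2$ can straddle both sides, since each cycle vertex has degree two whereas a vertex on one side of a join is adjacent to every vertex on the other side. Hence $2H'$ is perfect, and so all six replacement graphs $G',A,B,H^*,X,Y$ are perfect.

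Finally, I would replace the vertices $g',a,b,h^*,x,y$ of $G_0$ one at a time by $G',A,B,H^*,X,Y$ respectively, producing a chain $G_0,G_1,\dots,G_6=G$. \Cref{replacingperfect} states that replacing a single vertex of a perfect graph by a perfect graph yields a perfect graph, and it requires only that the current host graph be perfect, regardless of the neighborhood of the replaced vertex; so by induction each $G_i$ is perfect, and in particular $G=G_6$ is perfect. Because substitutions at distinct vertices commute, the graph obtained by this sequential process coincides with the simultaneously-substituted graph $G$ constructed in the proof of \Cref{sub-reach-inNP}.

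The main obstacle is the base case, namely confirming that the skeleton $G_0$ is perfect: this is the only step that inspects a concrete adjacency structure rather than invoking a closure property, though it reduces to the finite $C_5$-check described above. The two remaining steps are routine, relying on the closure of perfection under disjoint union and on repeated application of \Cref{replacingperfect}.
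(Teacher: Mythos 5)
Your proposal is correct and follows essentially the same route as the paper: both verify that $G_0$ and the six replacement graphs $G',A,B,H^*,X,Y$ are perfect and then conclude via \Cref{replacingperfect}. The only difference is that you spell out the finite $C_5$-check for $G_0$ and the closure of perfection under disjoint union, which the paper dismisses as ``easy to verify.''
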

\begin{proof}
    Let $A$, $B$, $X$, and $Y$ be four graphs such that each of them is isomorphic to $2H'$, and let $H^*$ be a graph isomorphic to $H'$.
    Recall that $G$ is obtained from a graph $G_0$ such that $V(G_0)=\{g',a,b,h^*,x,y\}$ and $E(G_0)=\{g'a,g'b,ab,ah^*,bh^*,ax,by,xy\}$ by replacing $g',a,b,h^*,x$, and $y$ with $G',A,B,H^*,X$, and $Y$, respectively. It is easy to verify that $G_0,A,B,H^*,X$, and $Y$ are perfect graphs. Therefore, from \Cref{replacingperfect}, $G$ is a perfect graph.
\end{proof}
\Cref{NPGisperfect} completes the proof of \Cref{sub-reachNPperfect}.

\subsection{Proof of \Cref{claim:clique_nodes}}

\begin{proof}
    First, we show the only-if direction. Suppose that there exists a path $P_{\mathcal{C}}=\langle w_{S_s}=w_{S_0},w_{S_1},...,w_{S_d}=w_{S_t} \rangle_{\mathcal{C}}$ in $\mathcal{C}$ connecting $w_{S_s}$ and $w_{S_t}$, where $d$ is the length of $P_{\mathcal{C}}$. Consider two vertex sets $S_{i-1}$ and $S_i$ for $i\in [d]$. Since $S_{i-1}$ and $S_i$ are adjacent, we have $|S_{i-1} \setminus S_i| \le k = |V(H)| - \mu$, which implies that $|S_{i-1}\cap S_i| \ge \mu$. Let $T_i$ be a vertex set such that $T_i\subseteq S_{i-1}\cap S_i$ with size exactly $\mu$ for each $i\in [d]$. For each $i \in [d-1]$, since $T_{i}\cup T_{i+1}\subseteq S_{i}$, the two corresponding clique-nodes $w'_{T_{i}}$ and $w'_{T_{i+1}}$ are joined by an edge in $\mathcal{C'}$. Thus, there exists a path $P_{\mathcal{C'}}$ in $\mathcal{C'}$ connecting $w'_{T_1}$ and $w'_{T_d}$, where $T_1\subseteq S_0 = S_s$ and $w'_{T_d}\subseteq S_d = S_t$. Furthermore, $w'_{T_1}$ and $w'_{I}$ with $I\neq T_1$ are joined by an edge, and $w'_{T_d}$ and $w'_{J}$ \revp{with} $J\neq T_d$ are joined by an edge. Therefore, there is a path connecting two clique-nodes $w'_{I}$ and $w'_{J}$ for any $I\subseteq S_s$ and $J\subseteq S_t$ with size exactly $\mu$.

    Then, we show the if direction. Suppose that there is a path $P_{\mathcal{C'}}=\langle w'_I=w'_{I_0},w'_{I_1},...,w'_{I_{d'}}=w'_{J} \rangle_{\mathcal{C'}}$ in $\mathcal{C'}$ connecting two clique-nodes $w'_{I}$ and $w'_{J}$ for any $I\subseteq S_s$ and $J\subseteq S_t$ with size exactly $\mu$. For each $i\in[d']$, let $S'_i$ be an {\hISIS} such that $I_{i-1}\cup I_i\subseteq S'_i$. For each $i\in[d'-1]$, since $|S'_{i}\cap S'_{i+1}|\geq|I_{i}|=\mu$, two nodes $w_{S'_{i}}$ and $w_{S'_{i+1}}$ are joined by an edge. Thus, there exists a path $P^*_{\mathcal{C'}}$ connecting $w_{S'_1}$ and $w_{S'_{d'}}$. $S'_1$ and $S_s$ contain at least $|I_0|=\mu$ common vertices, and $S'_{d'}$ and $S_t$ contain at least $|I_{d'}|=\mu$ common vertices. Therefore, there is a path connecting $w_{S_s}$ and $w_{S_t}$ in $\mathcal{C}$.
\end{proof}

    \subsection{The running time of our XP algorithm in \Cref{subsec:iso_XP}}
    We bound the running time of our algorithm.
    Let $n = |V(G)|$.
    Suppose that there is a function $g$ such that {\ISIso} can be solved in $g(n)$ time for a given graph $G \in \mathcal{G}$.
    The constructed clique-compressed reconfiguration graph has exactly $\tbinom{n}{\mu} = O(n^{\mu})$ clique-nodes.
    For each pair of two distinct clique-nodes, we decide whether there is an edge between them. 
    For each assignment from the vertices of $C = A\cup B$ to $F_1, F_2,..., F_\ell$, the number of possible choices of $W$ are as follows: $\tbinom{n}{|V(F_{i_1})|}\tbinom{n}{|V(F_{i_2})|}\cdots\tbinom{n}{|V(F_{i_{|C|}})|}=O(n^{2\mu f_{\max}})$, where $i_1,i_2,...,i_{|C|}\in [\ell]$ and $f_{\max}$ is the largest number of vertices among the graphs $F_1, F_2,..., F_\ell$.
    Furthermore, there are $O(\ell^{2\mu})$ possible assignments from the vertices of $C = A\cup B$ to $F_1, F_2,..., F_\ell$.
    Therefore, we can construct the edge set with size at most $O(n^{2\mu})$ in time $O(\ell^{2\mu} n^{2\mu (f_{\max}+1)} g(n))$.
    After constructing the clique-compressed reconfiguration graph, by using breadth-first search, we can check whether $(G,H,S_s,S_t,\kTJ)$ of {\ISIsoR} is a yes-instance or not in time $O(|\mathcal{V'}|+|\mathcal{E'}|)=O(n^{2\mu})$.
    The total running time of this algorithm is $O(n^\mu+\ell^{2\mu} n^{2\mu (f_{\max}+1)} g(n)+n^{2\mu})$. Recall that $\ell$ and $f_{\max}$ are constants independent of the instance $(G,H,S_s,S_t,\kTJ)$ of {\ISIsoR}. Therefore, if {\ISIso} for $\mathscr{G}$ and $\mathscr{H}$ can be solved in polynomial time $g(n)$, then {\ISIsoR} for $\mathscr{G}$ and $\mathscr{H}$ is in {\XP} under $\kTJ$ parameterized by $\mu=|V(H)|-k$.

\section{Proof of \Cref{TSeqkTS} in \Cref{sec:ISR}}\label{appendixISR}

     Since $\kTS$ is a generalization of $\TS$, the only if direction is clear.
     Then we prove the if direction. Let us start by showing the important relation between $\TS$ and $\kTS$. 

     \begin{proposition}\label{claimTS=kTS}
        Let $G$ be an even-hole-free graph that has two independent sets $I$ and $J$ with the same size. If $I$ and $J$ are adjacent under $\kTS$, $I$ and $J$ are reconfigurable under $\TS$.
     \end{proposition}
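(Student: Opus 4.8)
The plan is to first expose the rigid structure of a single $\kTS$-step between equal-size independent sets, and then to sequentialise it into single slides by ordering the individual moves so as to avoid conflicts, using even-hole-freeness precisely to guarantee that a good ordering exists. \textbf{Structure of one step.} Since $|I|=|J|$, a $\kTS$-step from $I$ to $J$ can be recorded by a bijection $\phi\colon I\to J$ with $\phi(x)\in\{x\}\cup N_G(x)$, the moved tokens being $M=\{x:\phi(x)\neq x\}$. I would consider the arcs $x\to\phi(x)$ for $x\in M$, which form vertex-disjoint directed paths and cycles. The key observation is that any internal vertex $x_i$ of such a path or cycle lies in $I\cap J$ and is the image of its predecessor $x_{i-1}\in I$ along an edge, so $x_{i-1}$ and $x_i$ would be two adjacent vertices of the independent set $I$ — impossible. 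Hence every component is a single arc, and the moves form a matching $\{u_1v_1,\dots,u_rv_r\}$ of $G$ with $u_i\in I\setminus J$, $v_i\in J\setminus I$ and $u_iv_i\in E(G)$, where $I\setminus J=\{u_1,\dots,u_r\}$ and $J\setminus I=\{v_1,\dots,v_r\}$.

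\textbf{Sequentialising via an ordering.} I would introduce an auxiliary digraph $D$ on $\{1,\dots,r\}$ with an arc $j\to i$ (read ``move $j$ before $i$'') whenever $j\neq i$ and $u_jv_i\in E(G)$. If $D$ is acyclic, then I process the moves in a topological order, each time sliding the single token $u_i\to v_i$. The set obtained after the $i$-th slide is $(I\cap J)\cup\{v_1,\dots,v_i\}\cup\{u_{i+1},\dots,u_r\}$; it is independent because $v_i$ is non-adjacent to every vertex of $J$ already present (all lie in the independent set $J$), while an adjacency $u_jv_i$ to a not-yet-moved token (i.e.\ $j$ after $i$ in the order) would give an arc $j\to i$, contradicting the topological order. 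Thus acyclicity of $D$ yields a valid $\TS$-reconfiguration from $I$ to $J$, and the whole proof reduces to showing $D$ is acyclic.

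\textbf{Acyclicity of $D$ from even-hole-freeness.} Suppose for contradiction that $D$ has a cycle, and take a shortest one, $c_1\to c_2\to\cdots\to c_p\to c_1$ with the $c_i$ distinct and $p\ge 2$. Alternating the matching edges $u_{c_i}v_{c_i}$ with the arc-edges $u_{c_i}v_{c_{i+1}}$ produces the closed walk
\[
v_{c_1},\,u_{c_1},\,v_{c_2},\,u_{c_2},\,\dots,\,v_{c_p},\,u_{c_p},\,v_{c_1}
\]
on $2p$ distinct vertices, i.e.\ a cycle of even length $2p\ge 4$ in $G$. Since all $u_{c_i}$ lie in the independent set $I$ and all $v_{c_i}$ lie in the independent set $J$, no $u$–$u$ or $v$–$v$ chord can exist, so any chord is of the form $u_{c_a}v_{c_b}$; but $u_{c_a}v_{c_b}\in E(G)$ with $b\notin\{a,a+1\}$ means $c_a\to c_b$ is an arc of $D$ that shortcuts the chosen cycle, contradicting its minimality. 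Hence the cycle is chordless, i.e.\ an even hole, contradicting that $G$ is even-hole-free. Therefore $D$ is acyclic.

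The main obstacle is exactly this last step: converting a cycle of the ordering digraph $D$ into a genuine \emph{induced} even cycle. The alternation of $I$- and $J$-vertices simultaneously forces evenness of the length and restricts every possible chord to the $u$–$v$ type, and minimality of the chosen $D$-cycle then rules out those chords. This is the only place where even-hole-freeness is used, and it is precisely the property that makes $\kTS$ collapse to $\TS$; the argument breaks down, as expected, for graphs that do contain even holes.
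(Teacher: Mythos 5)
Your proof is correct, and it reaches the conclusion by a genuinely different route from the paper. The paper argues by induction on $|I\setminus J|$: it invokes a lemma of Kami\'nski et al.\ stating that in an even-hole-free graph $G[I\bigtriangleup J]$ some vertex $v\in J\setminus I$ has at most one neighbour in $I\setminus J$, observes that adjacency under $\kTS$ forces at least one such neighbour $u$, slides the single token $u\to v$, and recurses on the pair $(I\cup\{v\}\setminus\{u\},\,J)$. You instead make the structure of one $\kTS$-step fully explicit (the moved tokens form a matching between $I\setminus J$ and $J\setminus I$ --- your ``internal vertex'' argument correctly rules out longer paths and cycles in the functional digraph), set up the conflict digraph $D$, and prove $D$ acyclic by turning a shortest directed cycle of $D$ into an induced even cycle of $G$; a topological order then yields the whole $\TS$-sequence in one pass. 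The two arguments are close relatives: a source of your digraph $D$ is exactly a vertex $v_i\in J\setminus I$ whose only neighbour in $I\setminus J$ is its matched partner $u_i$, so your acyclicity proof is in effect a self-contained proof of the cited lemma (specialised to the case where a perfect matching between $I\setminus J$ and $J\setminus I$ exists), with the even-$4$-cycle and longer even holes both excluded by the chord analysis. What your version buys is independence from the external reference and a direct, constructive description of the resulting $\TS$-sequence; what the paper's version buys is brevity, since the inductive step is one line once the lemma is granted. Both correctly verify the same side conditions (independence of every intermediate set, and the fact that each slide uses a genuine edge of $G$), so there is no gap to report.
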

     \begin{proof}
        We will prove this by induction on $t = |I \setminus J| \le k$. The base case $t=0$ is trivial. As the induction step, we show that the claim holds for $t > 0$, assuming that the claim holds for $t - 1$.
        Let $G'$ be the graph induced by $I \bigtriangleup J$. Since $G$ is an even-hole-free graph, $G'$ is also an even-hole-free graph. It is known that if $G'$ is an even-hole-free graph, then there exists a vertex $v\in J\setminus I$ with at most one \rev{neighbor} in $I\setminus J$~\cite{KMM12}. In addition, since $I$ and $J$ are adjacent under $\kTS$, each vertex $v\in J\setminus I$ is adjacent to at least one vertex in $I\setminus J$. Therefore, $v$ has exactly one neighbor $u \in I\setminus J$.
        The vertex set $I' = I\cup \{v\} \setminus \{u\}$ is an independent set of $G$ such that $I$ and $I'$ are adjacent under $\TS$. Considering $I'$ and $J$ are adjacent under $\kTS$ and $|I'\setminus J| = t-1$, from the assumption, $I'$ and $J$ are reconfigurable under $\TS$. In conclusion, $I$ and $J$ are reconfigurable under $\TS$, as claimed.
     \end{proof}
    By using \Cref{claimTS=kTS}, if there exists a reconfiguration sequence $\sigma$ between $I$ and $J$ under $\kTS$, then we can convert $\sigma$ to a reconfiguration sequence between $I$ and $J$ under $\TS$. This completes the proof of \Cref{TSeqkTS}. 
\end{document}